\documentclass[10pt,journal,compsoc]{IEEEtran}
\usepackage{graphicx}
\usepackage{subfigure}
\usepackage{amsmath}
\usepackage{booktabs}
\usepackage{algorithm}
\usepackage{color}
\usepackage{hyperref}
\usepackage{mathtools}
\usepackage{algpseudocode}
\usepackage{verbatim}
\usepackage{flushend}
\usepackage{amssymb}
\usepackage{amsthm}
\usepackage{multirow}
\usepackage{algorithmicx}
\usepackage{perpage} 
\MakePerPage{footnote} 
\newtheorem{theorem}{Theorem}
\DeclareMathOperator{\E}{\mathbb{E}}
\allowdisplaybreaks 
\algnewcommand\algorithmicinput{\textbf{Input:}}
\algnewcommand\INPUT{\item[\algorithmicinput]}
\algnewcommand\algorithmicoutput{\textbf{Output:}}
\algnewcommand\OUTPUT{\item[\algorithmicoutput]}
\algnewcommand\algorithmicbegin{\textbf{begin}}
\algnewcommand\BEGIN{\item[\algorithmicbegin]}
\algnewcommand\algorithmicendbegin{\textbf{end}}
\algnewcommand\ENDBEGIN{\item[\algorithmicendbegin]}

\algdef{SE}[DOWHILE]{Do}{doWhile}{\algorithmicdo}[1]{\algorithmicwhile\ #1}%

\usepackage{hyperref}

\ifCLASSOPTIONcompsoc
  \usepackage[nocompress]{cite}
\else
  \usepackage{cite}
\fi

\ifCLASSINFOpdf
\else
\fi
\newcommand{\liu}[1]{\textcolor{black}{#1}}
\newcommand{\liuS}[1]{\textcolor{black}{#1}}

\hyphenation{op-tical net-works semi-conduc-tor}

\begin{document}
\newcommand{\TheName}{\texttt{\textbf{FedCube}}}

\title{Data Placement for Multi-Tenant Data Federation on the Cloud}


\author{Ji Liu\IEEEauthorrefmark{2}\IEEEauthorrefmark{9}\IEEEauthorrefmark{1},
        Lei Mo\IEEEauthorrefmark{2}\IEEEauthorrefmark{3},
        Sijia Yang\IEEEauthorrefmark{4},
        Jingbo Zhou\IEEEauthorrefmark{5},
        Shilei Ji\IEEEauthorrefmark{6},
        Haoyi Xiong\IEEEauthorrefmark{9},~and~
        Dejing Dou\IEEEauthorrefmark{9}
\thanks{\IEEEauthorrefmark{2}Equal contribution. \IEEEauthorrefmark{1}Corresponding author.}
\thanks{\IEEEauthorrefmark{9}J. Liu, H. Xiong, and D. Dou are with Big Data Lab, Baidu Inc., Beijing, China.}
\thanks{\IEEEauthorrefmark{3}L. Mo is with the School of Automation, Southeast University, Nanjing, China.}
\thanks{\IEEEauthorrefmark{4}S. Yang is with the State Key Laboratory of Networking and Switching Technology, Beijing University of Posts and Telecommunications, Beijing, China.}
\thanks{\IEEEauthorrefmark{5}J. Zhou is with Business Intelligence Lab, Baidu Inc., Beijing, China.}
\thanks{\IEEEauthorrefmark{6}S. Ji is with Security department, Baidu Inc., Beijing, China.}
}

\IEEEtitleabstractindextext{%
\begin{abstract}
Due to privacy concerns of users and law enforcement in data security and privacy, it becomes more and more difficult to share data among organizations.  
Data federation brings new opportunities to the data-related cooperation among organizations by providing abstract data interfaces. With the development of \liu{cloud} computing, organizations store data on the \liu{cloud} to achieve elasticity and scalability for data processing. The existing data placement approaches generally only consider one aspect, which is either execution time or monetary cost, and do not consider data partitioning for hard constraints. In this paper, we propose an approach to enable data processing on the \liu{cloud} with the data from different organizations. The approach consists of a data federation platform named \TheName{} and a Lyapunov-based data placement algorithm. \TheName{} enables data processing on the \liu{cloud}. 
We use the data placement algorithm to create a plan in order to partition and store data on the \liu{cloud} so as to achieve multiple objectives while satisfying the constraints based on a multi-objective cost model. The cost model is composed of two objectives, i.e., reducing monetary cost and execution time. We present an experimental evaluation to show our proposed algorithm significantly reduces the total cost (up to 69.8\%) compared with existing approaches.
\end{abstract}
\begin{IEEEkeywords}
Data federation; Cloud computing; Data sharing; Data placement; Multi-objective
\end{IEEEkeywords}
}

\maketitle

\IEEEdisplaynontitleabstractindextext
\IEEEpeerreviewmaketitle


\IEEEraisesectionheading{\section{Introduction}
\label{sec:introduction}}

\IEEEPARstart{D}ata sharing is the first step for the data-related collaborations among different organizations~\cite{Greif1987}, for example, joint modeling with data from multi-party. Meanwhile, direct sharing of raw data with collaborators is difficult due to big volume and/or ownership~\cite{Liu2019,Voigt2017}. Data federation~\cite{federation} virtually aggregates the data from different organizations, which is an appropriate solution to enable data-related collaborations without direct raw data sharing. Based on \liu{cloud} service, data federation works as an intermediate layer to establish an abstract data interface. It provides a virtual data view, on which the involved organizations can collaboratively store, share and process data.

As high efficiency and low cost make it possible to lease resources, e.g.,\ computing, storage, and network, at a large scale, a growing number of organizations tend to outsource their data \liu{onto} the \liu{cloud}. With the pay-as-you-go model, \liu{cloud} computing (\liu{cloud}) brings convenience to the organizations to store and process a large amount of data. Cloud services bring a large number of resources at different layers. 
A Virtual Machine (VM) is an emulator of a computer, which can be viewed as a computing node in a network~\cite{Liu2014}. 
Through the data storage services, unlimited data can be stored on the \liu{cloud}. Cloud providers promise to provide three features, i.e.,\ infinite computing resources available on-demand, dynamic hardware resource provisioning in need, machines and storage paid and released as needed~\cite{above}.
Dynamic provisioning enables \liu{cloud} tenants/users to construct scalable systems with reasonable cost on the \liu{cloud}~\cite{cloud}. With these features, the scientific collaboration on the \liu{cloud} among different organizations becomes a practical solution.

Despite the advantages of \liu{cloud} computing, data security issue on the \liu{cloud} tends to be serious. When the data is stored on the \liu{cloud}, it is crucial to keep confidentiality. Only the authorized tenants/users should have access to the data~\cite{conf}. Encryption is a conventional way to keep the data confidential, such as 
identity-based encryption~\cite{boyen2006anonymous}. 
In addition, the isolation techniques~\cite{guabtni2006customizable}, which provide secure execution spaces for different jobs with specific access controls, are also used to control the accessibility to the data on \liu{the cloud}. A job is composed of a data processing program or a set of data processing programs to be executed on the \liu{cloud} in order to generate new knowledge from the input data. During the scientific collaboration based on the data stored on the \liu{cloud}, the combination of encryption algorithms and isolation techniques can be utilized to keep the confidentiality and security of the data on \liu{cloud}. 

When using the \liu{cloud} services, tenants/users have to pay for them. For instance, when tenants/users directly store their data on the \liu{cloud}, they would be charged for the \liu{cloud} storage service. Widely used \liu{cloud} service providers, such as Amazon Web Services (AWS) \liu{cloud}\footnote{https://aws.amazon.com/}, Microsoft Azure \liu{cloud}\footnote{https://azure.microsoft.com/en-us/} and Baidu \liu{cloud}\footnote{https://cloud.baidu.com/}, provide different data storage types, e.g.,\ hot data storage, data storage with low frequency, cold data storage, and archive data storage, as data storage services. The cost of data storage on the \liu{cloud} varies from type to type. In order to reduce the monetary cost to store and to process the data on the \liu{cloud}, it is necessary to choose a proper data storage type based on a data placement algorithm. 
However, the job execution frequency is not well exploited while constructing the data placement algorithms for the data storage on the \liu{cloud}.
In addition, existing approaches cannot exploit data partitioning techniques to satisfy multiple constraints.

\liu{There are multiple constraints for the data processing on the \liu{cloud}. For instance, when a user requires that the execution of a job should be within a time period, there is a hard time deadline. When a user has a budget limit for the data processing, there is a hard monetary budget for the execution of jobs. In addition, when the system is stable, the jobs can be continuously executed. Otherwise, there may be storage errors during the execution when the accumulated stored data exceed the storage capacity. Thus, the system stability is critical as well.}

In this paper, we propose a solution to enable data processing on the \liu{cloud} for scientific collaboration among different organizations. The solution consists of a secure data processing platform named \TheName{}, a multi-objective cost model, and a Lyapunov-based data placement algorithm. The main contributions of this paper are:
\begin{itemize}
\item The \TheName{} platform. We propose a \liu{cloud} platform, i.e.,\ \TheName{}.
\TheName{} enables secure data processing with the encrypted data stored on the \liu{cloud} for collaboration among different organizations.
\item A data placement problem formulation. We formulate the data placement problem based on a multi-objective cost model and constraints. The multi-objective cost model consists of monetary cost and execution time. The constraints include hard execution time deadline, hard monetary budget, and system stability constraint.
\item A Lyapunov-based data placement algorithm. We use the algorithm to create a data storage plan based on the cost model in order to reduce both monetary cost and the execution time of jobs with the consideration of constraints while exploiting data partitioning techniques. 
\item An extensive experimental evaluation based on a simulation and a widely used benchmark, i.e.,\ Wordcount, and a real-life data processing application for COVID-19~\cite{xiong2020understanding}. The simulation and the experiments are carried out based on a widely used \liu{cloud}, i.e.,\ Baidu \liu{cloud}.
\end{itemize}

The rest of the paper is organized as follows. Section~\ref{sec:relatedWork} reviews related works. Section~\ref{sec:systemDesign} presents the system design of the secure data processing platform. Section~\ref{sec:algorithm} presents the data placement system model, proposes a cost model, shows the hard constraints, and defines the problem. Section~\ref{sec:NODP} proposes the Lyapunove-based data placement algorithm based on the cost model. Section~\ref{sec:exp} shows the experimental results. Finally, we conclude the paper in Section~\ref{sec:con}.

\section{Related Work}\label{sec:relatedWork}

Lyapunov optimization is widely used to optimize the system while ensuring system stability. For instance, Lyapunov optimization is exploited to gain profit~\cite{fang2016stochastic}, to ensure the Quality of Service~\cite{zhang2019integrated} and the time average sensing utility~\cite{wang2018dynamic}. However, the aforementioned work focuses on a single objective besides the system stability and does not consider the task or data partitioning for satisfying multiple constraints. In this paper, we combine the Lyapunov optimization with multiple objectives for data placement.

Data placement is critical to both the monetary cost and the execution time of jobs. In order to reduce the execution time, data transfer can be reduced based on graph partitioning algorithm~\cite{golab2014distributed}. In addition, the data dependency among different jobs can be exploited to reduce the time and monetary cost to transfer data~\cite{Zhao17}. However, these methods only consider one objective, i.e.,\ reducing execution time. They cannot be applied to place the data in different storage types on the \liu{cloud}. A weighted function of multiple costs can be used to achieve multiple objectives, which can generate a Pareto optimal solution~\cite{Liu2016data}, while the authors do not consider the cost to store data on the \liu{cloud} or hard constraints. Load balancing algorithms~\cite{kumar2017dynamic} or dynamic provisioning algorithms~\cite{dynamic} are proposed to generate an optimal provisioning plan in order to minimize the monetary cost while they do not consider the data storage types on the \liu{cloud}. The storage type of the best performance can be selected to store data~\cite{Darwich2020} while the economic storage type can be selected~\cite{Black2016}. However, these two methods cannot address multiple objectives. In this paper, we propose an algorithm to achieve multiple objectives by placing data into various data storage types while satisfying hard constraints.

\liu{In order to handle a multi-objective problem, there are basically two types of solutions, i.e.,\ \textit{a priori} and \textit{a posteriori}~\cite{blagodurov2015multi,Liu2016data}. In this paper, we use an \textit{a priori} method, where the preference information is provided by the users, and then the best solution is produced. Our approach is based on a multi-objective scheduling algorithm focusing on minimizing a weighted sum of objectives. The advantage of such approach is that the scheduling is automatically guided by predetermined weights. In contrast, \textit{a posteriori} methods produce a Pareto front of solutions without predetermined preference information \cite{blagodurov2015multi}. Each produced solution is better than the others with respect to at least one objective, and users need to choose one from the produced solutions, which corresponds to user interference. In this paper, we assume that users can determine the value for the weight of each objective. \textit{a priori} methods can enable us to produce optimal or near-optimal solutions without user interference at run-time. Finally, when the weight of each objective is positive, the minimum of the weighted cost function is already a Pareto optimal solution~\cite{marler2004survey,zadeh1963optimality} and our proposed approach can generate a Pareto optimal or near-optimal solution with the predefined weights. Thus, we do not consider \textit{a posteriori}
methods in this paper.}

Data security is of much importance to the \liu{cloud} users. In order to protect data security, data accessibility is controlled by attributing different levels of permission to avoid unauthorized or malicious access to data on the \liu{cloud}~\cite{security}. In addition, encryption techniques~\cite{Anthes2010, feng2019securegbm} and distributed data storage plan based on data partitioning~\cite{Singh2011, bian2020mp2sda, bian2017multi} can be exploited. Federated learning is proposed to train a model while ensuring data privacy~\cite{google}, yet it is not applicable to the general data processing among different organizations on the \liu{cloud}. In addition, secure separated data processing spaces~\cite{guabtni2006customizable} are proposed to ensure the access control and privacy of data. The separated data processing spaces are disconnected from the public network, which ensures that the confidentiality and the security of data within the local network. Out proposed platform, i.e.,\ \TheName{}, not only provides different data access controls for different tenants/users but also exploits the secure separated data processing spaces to ensure the security and the confidentiality of the data.

\section{System Design}\label{sec:systemDesign}

In this section, we propose a secure data processing platform named \TheName{}. First, we explain the architecture of the platform. Then, we present the life cycle of users' accounts and jobs to be executed.

\subsection{Architecture}
\label{subsec:architecture}

The \TheName{} platform is a data federation platform that provides tenants/users with secure data processing service on the \liu{cloud}. Tenants/users can upload their data onto the platform and execute the self-written programs on Baidu \liu{cloud}. In addition, tenants/users can leverage the data from other organizations for their own data processing jobs, as long as they get permission from the data owners. We illustrate the architecture of the platform and explain the functionalities of each module in this section. As shown in Fig.~\ref{fig:architecture}, the functional architecture of the platform consists of four modules: 
\begin{figure}[t]
\centering
\includegraphics[width=0.45\textwidth]{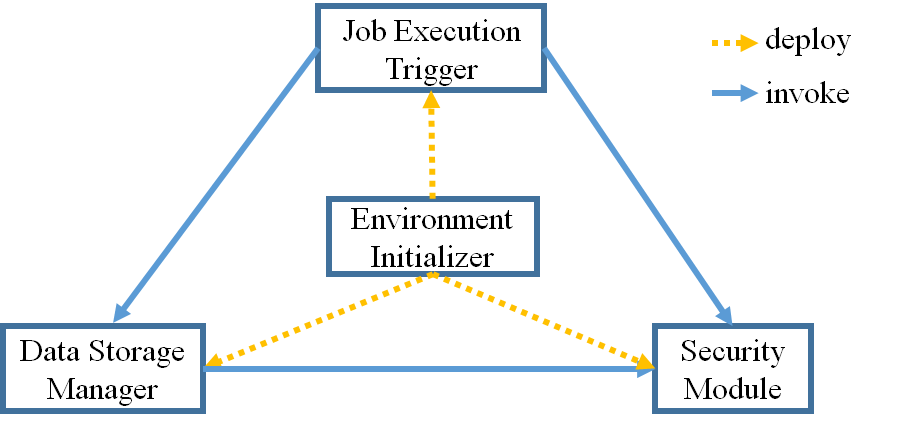}
\vspace{-2mm}
\caption{The functionality architecture of the \TheName{} platform.}
\label{fig:architecture} 
\vspace{-3mm}
\end{figure}

\subsubsection{Environment Initializer}
\label{subsubsec:Initializer}

The environment initializer creates the user account and its execution space on the coordinator node. The created user account is used for the user's security configuration, e.g.,\ the access permission to certain data from another user. The user account is also associated with secure execution spaces for the execution of submitted jobs in the cluster. The secure execution space is a working space without a connection to any public network, which can ensure the confidentiality and the security of the data within the local network.

\begin{figure}[!ht]
\centering
\includegraphics[width=0.45\textwidth]{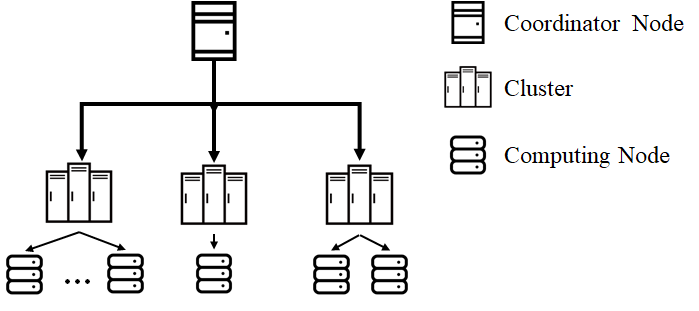}
\vspace{-2mm}
\caption{Infrastructure architecture of the \TheName{} platform.}
\label{fig:node}
\vspace{-3mm}
\end{figure}
As shown in Fig.~\ref{fig:node}, multiple clusters can be dynamically created by the environment initializer module when the execution of jobs is triggered. Each cluster consists of several computing nodes, i.e.,\ VMs on the \liu{cloud}. The coordinator node coordinates the execution among different clusters for all users.
The user has access to the platform through the coordinator node, which is connected to the public Internet. The computing nodes in each cluster are only interconnected with the coordinator node through the local network on the \liu{cloud}. Each computing node is created based on the image~\cite{Wei2009} indicated by the user, which contains necessary tools for the execution of her jobs. An image is a serialized copy of the entire state of a VM stored on the \liu{cloud}~\cite{Wei2009}.

\subsubsection{Data Storage Manager}\label{subsubsec:data_storage_manager}

The data storage manager creates a data storage account and storage buckets on the \liu{cloud} for a user. A storage bucket is a separated storage space to store the data with its own permission strategy. 
The data storage account is used to transfer data between the platform and the user's devices, e.g.,\ computer. Each account is associated with five buckets, i.e.,\ user data bucket, user program bucket, output data bucket, download data bucket, and execution space bucket. Each account has an independent Authorization Key (AK) and Secret Key (SK), with which the tenants/users can send or retrieve the data stored in the buckets.
In addition, the access permission strategy varies from bucket to bucket. For instance, the user has read and write permission on the user data bucket and the user program bucket while she only has the read permission on the download data bucket.
A user can store data in the user data bucket while she can submit self-written codes to the user program bucket. The tenants/users do not have read or write permission on the output data bucket and the execution space bucket. After the execution of the program generated based on the submitted codes, the output data is stored in the output data bucket.
After the confidentiality review of the output data, the output data is transferred to the download data bucket. The review is carried out by the owner of the input data of the job in order to avoid the risk that the raw data or sensitive information appears in the output data of the job. The execution space bucket is used to cache intermediate data of a job, which can be useful for the following execution in order to reduce useless repetitive execution~\cite{Heidsieck2019}.

\subsubsection{Job Execution Trigger}\label{subsubsec:Trigger}

\begin{figure*}[!ht]
\vspace{-5pt}
\centering
\includegraphics[width=0.68\textwidth]{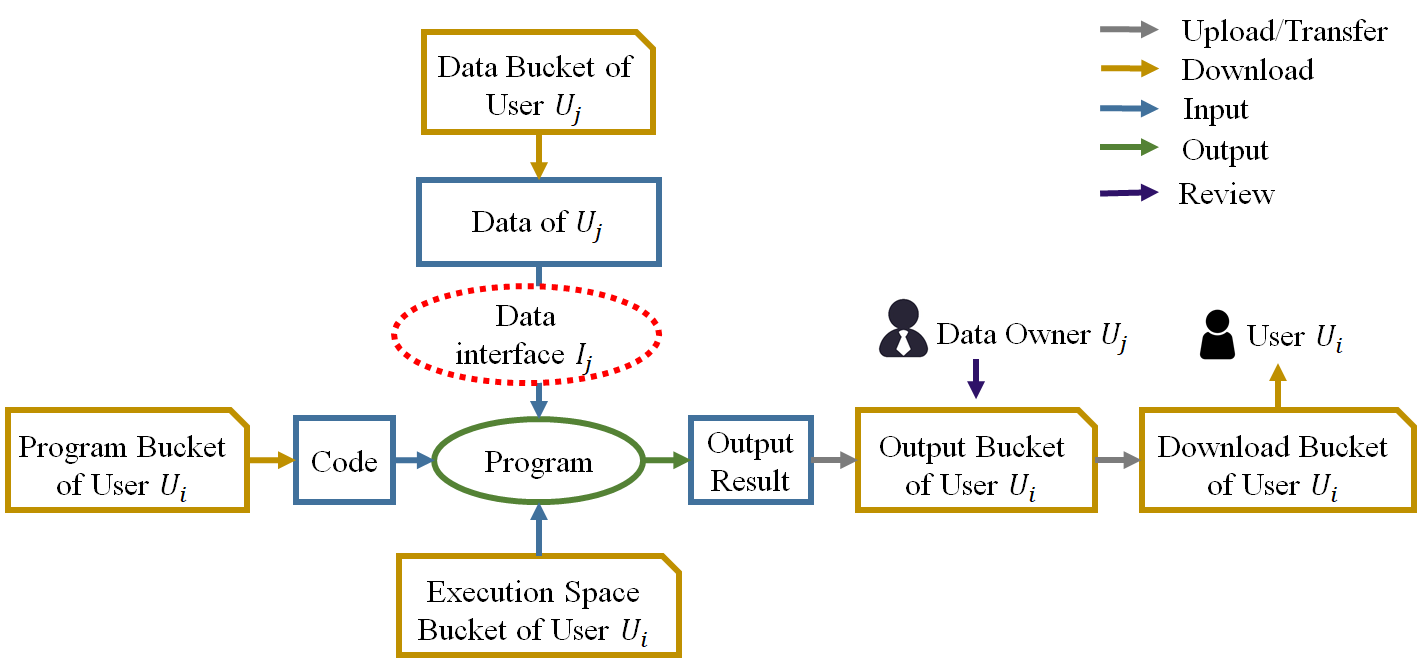}
\vspace{-3mm}
\caption{Job Execution Workflow.}
\vspace{-3mm}
\label{fig:process} 
\end{figure*}
The job execution trigger starts the execution of the job in a cluster. A user can upload the user-written codes onto the platform through a web portal. Then, she can start the execution of the program using the job execution trigger. Once the execution of the program is triggered, a cluster is created, deployed, and configured (see details in Section~\ref{subsubsec:Initializer}). Afterward, the execution of the job is performed in the computing nodes of the cluster.
When several jobs start simultaneously in the same cluster, the job execution trigger creates the same number of execution spaces as that of jobs in order to enable parallel execution without conflict. When the input data of a program consists of the data from other data owners, the corresponding data interfaces are used in order to avoid direct raw data sharing. Let us take two tenants/users as example: User $U_1$ and User $U_2$. A data interface ($I_1$) is defined by the data owner (User $U_1$), which is associated with the data ($D_1$) on the platform. When User $U_2$ gets the permission to use $D_1$, the program generated based on the submitted codes of User $U_2$ can process the data $D_1$ using the Interface $I_1$.
The intermediate data stored in the execution bucket can also be used when the job needs the results of the previous execution.

\subsubsection{Security Module}\label{subsubsec:Security_Module}

In the platform, we use four mechanisms to ensure the security of the data. The first mechanism is to encrypt the data before storing it on the \liu{cloud}. The encryption is based on the Rijndael encryption algorithm~\cite{Jamil2004}.
The second mechanism is to separate the computing nodes from the public network, e.g.,\ Internet, which ensures that no data communication is allowed between the clusters and outside devices, e.g.,\ servers, on the \liu{cloud}. The third mechanism is a uniformed data access control. When a user applies for the permission of the data owned by another user, a data access interface is provided by the data owner instead of direct raw data sharing. The last mechanism is the audition of the codes and output data by data owners, which ensures that no data is leaked from output data.
Through these mechanisms, data confidentiality and security are ensured by the data interface defined by the data owner while ensuring efficient cooperation among different organizations.

\subsection{Life Cycle}\label{subsec:scenario}

In order to present the interactions among users, the platform, and the job execution on the platform, we present the account life cycle and job life cycle. The life cycle describes the state transition of a user account or a job on the platform.  
We assume that there are $n$ scientific collaborators. Each collaborator has private data, which requires keeping confidentiality and security. Through the life cycle, we present how $n$ collaborators process the data on the platform.

\subsubsection{Account Life Cycle}\label{subsubsec:user_account_life_cycle}

The account life cycle consists of three phases, i.e.,\ account creation, data processing, and account cleanup. First, the account related to the user of the platform is created. Then, the user can process the data on the platform. Finally, when the user no longer needs the platform, the data related to the account is removed.

\textit{Account Creation Phase.} 
When a new user needs to use the platform, we create an account and configure the platform using the environment initializer module as shown in Fig.~\ref{fig:architecture}. For the $n$ collaborators in the above scenario, we create $n$ accounts ($U_{t}$ with $t$ representing the number of the collaborator) for each scientific collaborator on the platform. First, the job execution trigger is deployed for each user in the coordinator node. Then, the data storage manager creates a storage account and five storage buckets (see details in Section~\ref{subsubsec:data_storage_manager}) for each user. Afterward, the environment initializer deploys the security module for each user. The security module contains the encryption and decryption information for each user. Please note that the encryption and decryption information is different for different users.

\textit{Data Processing Phase.} 
After the account creation, \liu{data processing jobs can be carried out} on the platform. Before processing the data, each user uploads her own data and the data interface file to the user data bucket. As shown in Fig.~\ref{fig:process}, if User $U_i$ needs to exploit the data from another Users $U_j$, User $U_i$ can apply for the permission. Once User $U_i$ gets the permission from User $U_j$, \liu{the user} also gets the necessary information, e.g.,\ the mock data, to access the data using the corresponding data interface. The mock data contains the data schema of the raw data and some randomly generated examples, while the raw data is never shared with the users. User $U_i$ may use the data from several other tenants/users at the same time. Then, User $U_i$ can submit the codes to process data. In order to process data, User $U_i$ triggers the execution of a job related to the submitted codes (see details in Section~\ref{subsubsec:joblifecycle}), which corresponds to the execution of the job ($j_i$ with $i$ representing the number of the execution) on the platform. During the execution of a job, the intermediate data generated from different execution of the job can be directly used. After the execution and the review of the output data, user $U_i$ can download the output data of Job $j_i$ from the user download bucket. 

\textit{Account Cleanup Phase.} 
When the user no longer needs the platform, the corresponding data, storage buckets, and accounts are removed from the platform by the environment initializer module.

\textit{Initialization Phase.} 
\begin{figure}[!ht]
\centering
\includegraphics[width=0.49\textwidth]{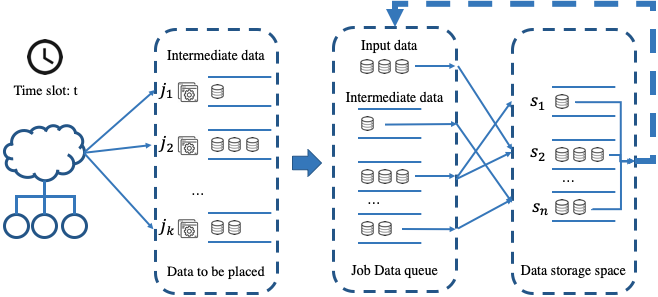}
\caption{System model for data placement.}
\label{fig:dataPlacement}
\end{figure}
\subsubsection{Job Life Cycle}\label{subsubsec:joblifecycle}

The job life cycle consists of four phases, i.e.,\ initialization, data synchronization, job execution, and finalization.

The initialization phase~\cite{mapreduce} is to prepare the environment to execute a job on the platform. The preparation contains three steps: provisioning, deployment, and configuration. First, VMs are provisioned to the job as computing nodes.
There are two cases where existing VMs can be provisioned to the job. The first case is that there are enough live computing nodes on the platform corresponding to the execution of the same or the other jobs of the same user. The second case is that there are enough live computing nodes for the programs of other tenants/users, and all the related tenants/users allow sharing computing nodes. Otherwise, the environment initializer module dynamically creates new VMs as computing nodes, which contain necessary tools for the execution of the job. Then, in order to execute the job, a proper execution space is deployed on the allocated VMs. In order to enable data access, the execution space is configured in each node. For instance, the AK and SK files are transferred into the computing nodes in order to enable data synchronization. 

\textit{Data Synchronization Phase.} 
During the data synchronization phase~\cite{job}, the data storage module synchronizes the data or data interfaces stored on the \liu{cloud}. In addition, the scripts or the files corresponding to the submitted codes are also transferred to the execution space created in the initialization phase. 

\textit{Job Execution Phase.} 
The execution phase~\cite{mapreduce} is the period to execute jobs in the execution space of corresponding VMs. The execution frequency of each job can be dynamically monitored by the platform to compute the cost of data storage. The data, including newly generated intermediate data, is dynamically placed with appropriate storage types with small cost according to the method presented in Section~\ref{sec:NODP}. As shown in Fig.~\ref{fig:process}, after synchronizing the data from buckets, the program corresponding to the submitted codes processes the input data. The execution can be performed in a single computing node or multiple computing nodes in order to reduce the overall execution time. After the execution, the output data is transferred to the output bucket of the user. Once the data is reviewed and approved by the data owners of the input data, it is encrypted by the security module and is transferred to the download bucket to be accessed by the user.

\textit{Finalization Phase.} 
In the finalization phase~\cite{finalization}, the data storage manager uploads the encrypted intermediate of the job. Afterward, the environment initializer module removes the corresponding execution space(s). If a node does not contain any execution space, the node is released, i.e.,\ removed, by the environment initializer, in order to reduce the monetary cost to rent the corresponding VMs.

\section{Multi-Objective Cost Model and Problem Formulation}\label{sec:algorithm}

In this section, we first present the system model for data placement. Then, we propose a cost model based on two costs, i.e.,\ monetary cost and execution time. Afterward, we present the data placement constraints, i.e.,\ hard execution time constraints and the hard monetary budget constraints. Finally, we define the problem to address in the paper.

\subsection{Data Placement System Model}
\label{subsec:systemModel}

The system model for data placement is shown in Fig.~\ref{fig:dataPlacement}. In the \TheName{} platform, we assume that the execution of jobs generates intermediate data at time slot $t$, which may be used as input data in the following time slots, e.g.,\ $t+x$ with $x>0$. Then, the intermediate data should be placed with other input data. Each job has a queue to store the generated intermediate data, and we consider $N$ data storage spaces, which correspond to $N$ storage types with diverse data access speeds and diverse prices to store data. Each data set can be placed to one or multiple data storage types. In order to place a data set to multiple storage types, a data set can be partitioned into several chunks, and each chunk is placed to a data storage type. We assume that the valid time of data set $d_i$ placed at storage type $s_j$ is $T_{max(i, j)}$. If data set $d_i$ is not accessed by any job within $T_{max(i, j)}$, the data set will be removed from the storage space of the platform. When there is a data set generated during the execution of a job or when a job is executed, all the input data is placed again. When the input data is being replaced, its original corresponding storage type is kept until a newly placed storage type is associated.

\subsection{Cost Model}\label{subsec:description}

Inspired by~\cite{Liu2016data}, we propose a multi-objective cost model. 
The cost model is composed of monetary cost and time cost (i.e.,\ the execution time of a job). In order to find a storage plan, we need a cost model to estimate the cost of storing the input data for the execution of jobs. The cost model is generally implemented in the data storage module and under a specific execution environment. In the case of this paper, the execution environment is the \TheName{} platform. The origin of parameters mentioned in this section is summarized in Table~\ref{tal:parameter}. We assume that there are $K$ jobs on the platform.
\begin{table*}[ht]
\centering
\caption{Description of parameters. ``Abbreviation'' represents the abbreviation of the parameters. ``Origin'' represents where the value of the parameter comes from. UD: that the parameter value is  defined by users; Measure: that the parameter value is estimated by the user with the $job$ in a \liu{cloud} environment; Execution: measured during the execution of job in \liu{cloud}; \liu{cloud}: the parameter value is obtained from the \liu{cloud} provider}
\label{tal:parameter}
\begin{tabular}{llll}
\hline
Abbreviation & Parameter & Meaning & Origin \\
\hline
DT & DesiredTime & The estimated execution time to execute a job & UD\\
DM & DesiredMoney & The estimated monetary cost to execute a job& UD\\ 
TDL & TimeDeadline & The hard execution time deadline to execute a job & UD\\
MB & MoneyBudget & The hard monetary budget to execute a job& UD\\  
AIT & averageInitializationTime&The average time to initialize a computing node&Measure\\
CSP & ComputingSpeedPerCPU&The average computing performance of each computing node&Measure\\
WL & workload& The workload of a job&Measure\\
\liuS{$\alpha_k$} & \liuS{$\alpha_k$} & \liuS{The percentage of the workload of Job $k$} that can be executed in parallel&Measure\\
speed & speed&The data transfer speed for a type of data storage & \liu{cloud} \\
SP & StoragePrice&The monetary cost to store data with a storage type & \liu{cloud} \\
RP & ReadPrice&The monetary cost to read data from a cloud storage service & \liu{cloud}\\
VMP & VMPrice & The monetary cost to use a VM & \liu{cloud}\\
\hline     
\end{tabular}
\end{table*}

The total cost to execute a set of jobs with a data placement plan at time slot $t$ is defined as the sum of the total cost of all the jobs:
\begin{equation}
\begin{split}
\mathrm{TotalCost}(Plan[t])=&\sum_{k = 1}^K \mathrm{Cost}(job_k, Plan[t]),
\end{split}
\label{eq:Cost}
\end{equation}
where $Plan[t]$ represents a data placement plan of the data sets related to the set of jobs at time slot $t$, and $j_k$ represents the $k^{th}$ job. 
In the rest of this paper, the total cost represents the normalized cost to execute a set of jobs with a data placement plan per time unit.
$Plan[t]$ is a matrix of data placement variables, which can be expressed by the following formula:
\begin{equation}
\begin{split}
    \mathrm{Plan}[t]=\begin{bmatrix} 
    p_{0,0}[t] & p_{0,1}[t] & ... & p_{0, n}[t] \\ 
    p_{1,0}[t] & p_{1,1}[t] & ... & p_{1, n}[t] \\ 
    ... & ... & \ddots & ... \\ 
    p_{m,0}[t] & p_{m,1}[t] & ... & p_{m, n}[t] \\ 
    \end{bmatrix},
\end{split}
\label{eq:plan}
\end{equation}
where $p_{i,j}[t]$ represents that data set $d_i$ is placed to storage type $s_j$, $m$ represents the number of input and intermediate data sets, and $n$ represents the number of storage types. When $p_{i,j}[t] = 0$, data set $d_i$ is not placed to storage type $s_j$; when $p_{i,j}[t] = 1$, data set $d_i$ is directly placed to storage type $s_j$; When $\liu{0} \leq p_{i,j}[t] \leq 1$, data set $d_i$ is partitioned and the part corresponding to $p_{i,j}[t]$ is placed to storage type $s_j$.

The total cost to execute a job is defined by:
\begin{align}\label{eq:totalCost}
\mathrm{Cost}(job_k, Plan[t])=&~w_{m}\cdot\mathrm{M_{n}}(job_k,Plan[t]))\cdot\mathrm{f}(job_k)\nonumber\\
&~+ w_{t}\cdot\mathrm{T_{n}}(job_k, Plan[t]),
\end{align}
where $\mathrm{T_{n}}(job_k, Plan[t])$ and $\mathrm{M_{n}}(job_k,Plan[t])$ are the normalized time cost and monetary cost, respectively, and they can be defined by Formulas~\eqref{eq:time} and \eqref{eq:money}; $job_k$ represents the $k^{th}$ job and $Plan[t]$ represents the data placement plan at time slot $t$; $w_{t}$ and $w_{m}$ represents the importance of the execution time and the monetary cost of the job. \liu{Defined by the user, $w_{t}$ and $w_{m}$ should be positive values that meet the constraints: $0 \leq w_{t} \leq 1$, $0 \leq w_{m} \leq 1$, $w_{t} + w_{m} = 1$.}
$\mathrm{f}(job_k)$ represents the average frequency of the job execution, which can be dynamically measured according to the history execution before the job execution, e.g.,\ daily, monthly, quarterly and yearly. 
Since the time cost and monetary cost are normalized, neither of them has a unit. 
\liu{Please note that the time cost refers to the execution time of $Jobs$ once while the hard monetary budget is related to the budget per time period, e.g.,\ a day or a month.}

\subsubsection{Time Cost}\label{subsubsec:cost_model}

The normalized time cost is defined by the following formula:
\begin{equation}
\mathrm{T_{n}}(job_k, Plan[t]) = \frac{\mathrm{T}(job_k, Plan[t])}{\mathrm{DT_k}},
\label{eq:time}
\end{equation}
where $\mathrm{T}(j,Plan[t])$ represents the total execution time of the job and $DT_k$ represents the expected execution time (set by the user) of Job $job_k$. \liu{Please note that the execution time $\mathrm{T}(j,Plan[t])$ represents the time to execute $job_k$ once.} The desired execution time could be larger or smaller than the real execution time $\mathrm{Time}(j,Plan[t])$ \liu{while it should be larger than a limit defined by the \TheName ~platform, e.g.,\ $1/20 *\mathrm{SET_{k}}$ with $\mathrm{SET_{k}}$ representing the sequential execution time of Job $job_k$ with one computing node, in order to avoid the unfairness among users.} The total execution time consists of three parts, which are defined by:
\begin{align}\label{exe_time}
\mathrm{T}(job_k,Plan[t]) =&~ \mathrm{InitT}(job_k) \nonumber + \mathrm{DTT}(job_k,Plan[t]) \nonumber\\
&~+ \mathrm{ET}(job_k),
\end{align}

\noindent where $\mathrm{InitT}(job_k)$ represents the Time to Initialize the computing nodes for Job $job_k$; 
$\mathrm{DTT}(job_k,Plan[t])$ represents the Time to Transfer the Data from the \liu{cloud} storage service to computing nodes; 
$\mathrm{ET}(job_k)$ represents the Execution Time of Job $job_k$. 
The initialization of the computing nodes for Job $job_k$ can be specified by the user or realized by the platform, which is out of the scope of this paper while the time can be calculated based on Job $job_k$, e.g.,\ $n_k\cdot AIT$ with $n_k$ representing the number of computing nodes and $AIT$ representing the average time to initialize a computing node. The data transfer time can be calculated based on the size of the input data of Job $job_k$ and the data placement plan as follows:
\begin{equation}\label{eq:storagePlan}
\mathrm{DTT}(job_k,Plan[t]) =
\sum_{j = 1}^N\sum_{i\in \mathrm{data_k}}\dfrac{size(d_i)}{speed_j}\cdot p_{i, j}[t],
\end{equation}
where $speed_j$ represents the speed to transfer data from data storage type $j$ to computing nodes. 
\liu{As explained in Section \ref{subsec:systemModel}, $N$ represents the total number of storage types on the \TheName ~platform.}
According to the Amdahl's law~\cite{Sun2010}, the execution time of Job $j$ can be estimated by the following formula~\cite{Liu2016data} \liu{when the impact of data communication can be ignored with $n$ computing nodes}:
\begin{equation}
\mathrm{ET}(job_k)=\frac{[\liuS{\alpha_k}/n+(1\liu{-}\liuS{\alpha_k})]\cdot\mathrm{WL}(job_k)}{CSP},
\label{eq:executionTime}
\end{equation}
where \liuS{$\alpha_k$}\footnote{\liu{
\liuS{$\alpha_k$} can be obtained by measuring the execution time of executing the job $k$ twice with different numbers of \liuS{computing nodes}~\cite{Liu2016data}. For instance, assume that we have $t_1$ for \liuS{$m_1$} \liuS{computing nodes} and $t_2$ for \liuS{$m_2$} \liuS{computing nodes}, 
\begin{equation}\label{eq:peq1101}
\begin{split}
\liuS{\alpha_k} = \frac{\liuS{m_2} * \liuS{m_1} * ( t_2 - t_1 )}{ \liuS{m_2} * \liuS{m_1} * ( t_2 - t_1 ) + \liuS{m_1} * t_1 - \liuS{m_2} * t_2 }
\end{split}
\end{equation}}} 
represents the percentage of the workload that can be executed in parallel; $n$ is the number of computing nodes, which is configured by users \liu{while it should be less than a limit, e.g.,\ 20, defined by the \TheName ~platform}; $\mathrm{WL}(j)$ represents the workload of a job which can be measured by the number of FLOP (FLoat-point Operations)~\cite{2014evaluating}. $CSP$ is the average computing performance of each computing node, which can be measured by the number of FLOPS (FLoating-point Operations Per Second). 

\subsubsection{Monetary Cost}\label{subsubsec:monetary_cost}

Normalized monetary cost is defined by the following formula:
\begin{equation}
\mathrm{M_{n}}(job_k, Plan[t]) = \frac{\mathrm{M}(job_k, Plan[t])}{\mathrm{DM_k}},
\label{eq:money}
\end{equation}
where $\mathrm{Money}(job_k,Plan[t])$ is the financial cost to rent VMs as computing nodes on the \liu{cloud}. \liu{Please note that the monetary cost $\mathrm{Money}(job_k,Plan[t])$ represents the total monetary cost to execute $job_k$ within a time period, e.g.,\ a month.} $DM_k$ represents the expected execution monetary cost of Job $Job_k$, which can be configured by the user \liu{while it should be larger than or equal to a limit defined by the \TheName ~platform, i.e.,\ $\mathrm{SMC_{k}}$ with $\mathrm{SMC_{k}}$ representing the monetary cost of Job $job_k$ with one computing node, in order to avoid the unfairness among users.}
$DM_k$ can be bigger or smaller than the real monetary cost $\mathrm{Money}(job_k, t)$. 
$\mathrm{Money}(job_k, Plan[t])$ can be estimated based on the following formula:
\begin{align}\label{monetaryCost}
\mathrm{M}(job_k,Plan[t]) =&~ \mathrm{EM}(job_k, Plan[t]) \nonumber\\ 
&+ \mathrm{DSM_k}(job_k, Plan[t]) \nonumber\\ 
&+ \mathrm{DAM_k}(job_k, Plan[t]),
\end{align}
where $\mathrm{EM}(job_k, Plan[t])$ represents the Monetary cost to use the computing nodes to Execute the job; $\mathrm{DSM}(job_k, Plan[t])$ represents the Monetary cost to Store the Data on the \liu{cloud} storage service; $\mathrm{DAM}(job_k, Plan[t])$ represents the Monetary cost to Access to the Data. $\mathrm{EM}(job_k, Plan[t])$ can be estimated by the following formula:
\begin{multline}
\mathrm{EM}(job_k, Plan[t]) = \mathrm{VMP}(job_k) \cdot n_k\\
\cdot[\mathrm{T}(job_k, Plan[t]) - \mathrm{InitT}(job_k)],
\end{multline}
where $\mathrm{VMP}(job_k)$ represents the average monetary cost of a VM for the execution of Job $j$; $n_k$ represents the number of computing nodes to execute the job; $\mathrm{T}(job_k, Plan[t])$ and $\mathrm{InitT}(job_k)$ are defined in Formula~\eqref{exe_time}.

We allocate the storage monetary cost of a data set to the jobs based on the workload. $\mathrm{DSM}(j,Plan[t])$ is defined by the following formula:
\begin{multline}
\mathrm{DSM}(job_k,Plan[t])=
\frac{\mathrm{WL}(job_k)}{\sum_{l=1}^K(\mathrm{WL}(job_l) \cdot \mathrm{f}(job_l))}\cdot \\
\sum_{j=1}^N\sum_{i\in\mathrm{data_k}}(\mathrm{SP_j} \cdot \mathrm{size}(d_i) \cdot p_{i, j}[t]),
\end{multline}
where $\mathrm{WL}(job_k)$ represents the workload of job $job_k$; $\mathrm{dataset}(j)$ represents the data sets that job $j$ uses; $\mathrm{job}(i)$ represents the jobs that takes data $i$ as input data; $\mathrm{SP}(s_i)$ represents the monetary cost to store the data with the storage type $s_i$, which is defined in the data placement plan $plan[t]$, on the \liu{cloud}; $\mathrm{size}(d_i)$ represents the size of the input data $d_i$. 

$\mathrm{DAM}(job_k, Plan[t])$ is defined by the following formula:
\begin{multline}\label{eq:dataaccess}
\mathrm{DataAccessMoney}(job_k, Plan[t]) = \\  
\sum_{j=1}\sum_{i\in\mathrm{data_k}}(\mathrm{RP_j}\cdot\mathrm{size}(d_i)\cdot p_{i, j}[t]),
\end{multline}
where $\mathrm{RP}_j$ represents the monetary cost to read data $d_i$ from the \liu{cloud} storage service; $\mathrm{size}(d_i)$ represents the size of the input data $d_i$ of Job $job_k$.

\subsection{Data Placement Constraints}\label{sec:constraint}

In this section, we present the constraints of data placement. First, we present the hard execution time and monetary budget constraints for each job. Then, we present the system stability constraint based on Lyapunov optimization.

We assume that there are hard time deadline and hard monetary budget for each job, which can be formulated as follows:
\begin{equation}
\mathrm{T}(job_k, Plan[t])\leq\mathrm{TDL_k},~\forall k \in [0, K],
\label{eq:hardTime}
\end{equation}
\begin{equation}
\mathrm{M}(job_k, Plan[t])\leq\mathrm{MB_k},~\forall k \in [0, K],
\label{eq:hardBudget}
\end{equation}
where \liu{$\mathrm{T}(job_k, Plan[t])$ and $\mathrm{M}(job_k, Plan[t])$ are defined in Formulas~\ref{exe_time} and \ref{monetaryCost} respectively,} $TDL_k$ represents the hard execution time deadline, $MB_k$ represents the hard monetary cost Budget, and $Jobs$ represents the set of jobs in the system. 

For storage spaces, as shown in the right part of Fig.~\ref{fig:dataPlacement}, we use $S_j(t)$ to denote the set of data sets placed in the data storage space of Type $j$. Therefore, the dynamic set is defined as follows:
\begin{equation}
\mathrm{S}_j(t+1)=\max[\mathrm{S}_j(t) - r_j(t), 0] + \sum_{i=1}^{M}p_{i,j}[t],
\label{eq:storageQueue}
\end{equation}
where $r_j(t)$ represents the data to be removed because of time limit and $\sum_{i=1}^{m}p_{i,j}[t]$ represents the newly placed data sets to data storage type $s_j$.

For jobs shown in the middle part of Fig.~\ref{fig:dataPlacement}, we use $J_i$ to denote the set of data sets generated from the execution of Job $i$. We have the following job data storage set defined as follows:
\begin{equation}\label{eq:jobQueue}
\mathrm{J}_k(t+1)=\max\left[\mathrm{J}_k(t) - \sum_{j = 1}^N\sum_{i \in data_k} p_{i,j}[t], 0\right] + G_k[t], 
\end{equation}
where $data_k$ represents the set of input data sets of Job $k$, and $G_k[t]$ represents the newly generated intermediate data of Job $k$.

We exploit the Lyapunov optimization technique~\cite{Lyapunov2017} by considering both the set of data sets placed in the data storage spaces and the job data storage sets. Let $D(t) = (S_j(t), J_i(t), j \in \{1,\ldots,n\}, i \in \{1,\ldots,k\}, t \in \{1, 2, \ldots,\})$ denote all the data sets in time slot $t$. We have the following constraint in order to ensure the stability of the system:
\begin{equation}
\Bar{D} \triangleq \lim_{T \rightarrow \infty} \frac{1}{T} \sum_{t=0}^{T-1}\left(\sum_{j=1}^{N}\E\{S_j(t)\} + \sum_{k=1}^{K}\E\{J_k(t)\}\right) < \infty.
\label{eq:stable}
\end{equation}

\subsection{Problem Definition}
\label{problem_definition}
\begin{table*}[ht]
\centering
\caption{The monetary cost to store data on the \liu{cloud} with different storage types, i.e.,\ Standard, Low frequency, Cold and Achieve.}
\label{tal:storage}
\begin{tabular}{lllll}
\hline
& Standard & Low frequency & Cold & Achieve \\
Expected data access frequency & frequently & $<$ once per month & $<$ once per year & $\geq$ three years \\
Cost to store data (Dollar/GB/month) & 0.0155& 0.0113& 0.0045 & 0.015\\  
Cost to read data (Dollar/GB) & N/A & 0.0042 & 0.0085 & 0.12 \\
\hline     
\end{tabular}
\end{table*}

The problem we address in the paper is a data placement problem, i.e.,\ how to choose a storage type to store the data in order to reduce \liu{the expected total cost}, which consists of the monetary cost and the execution time of jobs, while satisfying constraints, on the \liu{cloud}. A job can be executed multiple times because the user-defined codes are updated, or the parameters are updated~\cite{Chunduri2018}.
As shown in Table~\ref{tal:storage}, different data storage types of storage services on the \liu{cloud} correspond to different prices. The storage type with higher expected data access frequency, e.g.,\ Standard, has a higher price and higher data access speed.
The total cost to execute a job once differs with different data placement plans.
Thus, the problem we address in this paper is how to find an optimal data placement plan of all the data sets in order to reduce \liu{the expected total cost} to execute the jobs with different execution frequencies based on a cost model. We define \liu{the expected total cost} as:
\begin{equation}
\overline{Cost}(Jobs, Plan[t]) = \lim_{T \rightarrow \infty} \dfrac{1}{T}\sum_{t=0}^{T-1}\E\{Cost(Jobs, Plan[t])\}.
\label{eq:cost}
\end{equation}
Then, the problem addressed in this paper can be formulated as follows:
%
\begin{align}\label{eq:problem}
&\displaystyle\min~\overline{Cost}(Jobs, Plan[t])\\
&~\text{s.t.}
\begin{cases}
\displaystyle p_{i,j}[t] \in [0, 1] \nonumber\\
\displaystyle\sum_{j=1}^{N} p_{i,j}[t] = 1,\nonumber\\
\displaystyle\mathrm{Formulas}~\eqref{eq:hardTime}, 
~\liu{\eqref{eq:hardBudget}}, ~\mathrm{and}~\eqref{eq:stable}.\nonumber\\
\end{cases}
\end{align}
where $Jobs$ represents the set of jobs in the system.

\liuS{The data placement problem is a typical NP-hard problem~\cite{Ren2018}. Let us separate the data placement variables into two parts, i.e., $p'_{i,j}$ and $p''_{i,j}$. $p'_{i,j}$ is a continuous variable between 0 and 1, which represents the partitioning of a data set. $p''_{i,j}$ is a 0-1 integer, which is the scheduling decision. Then, we have $p_{i,j} = p'_{i,j} * p''_{i,j}$. Then, the problem defined in Formula \ref{eq:cost} is a Mixed Integer Linear Programming (MILP) problem, which is a proven NP-hard problem \cite{mo2018controllable, burer2012non, hartmanis1982computers}. In this case, the exhaustive search for an optimal solution for $p''_{i,j}$ increases exponentially and the complexity is $\mathcal{O}(N^M)$, which cannot be solved within a polynomial time, with $N$ representing the number of storage types and $M$ represents the number of data sets. }

\section{Near-Optimal Data Placement}\label{sec:NODP}

In this section, we present a near-optimal data placement approach based on Lyapunov optimization. Lyapunov optimization is widely used to achieve optimization objectives while ensuring the system stability~\cite{Lyapunov2017,Liu2020Lyapunov}. In order to exploit Lyapunov optimization techniques, we first construct a Lyapunov function and propose a Lyapunov-based algorithm (LNODP) to perform the data placement while ensuring system stability. Then, we propose a greedy approach to perform the near-optimal data placement while satisfying hard deadlines.
The greedy approach consists of three algorithms, i.e.,\ near-optimal data planning (NOD Planning), near-optimal data placement, and data placement (NOD Placement) with partitioning (NOD Partitioning). LNODP exploits NOD Planning to generate a near-optimal data placement plan; NOD Planning takes advantage of NOD Placement to choose optimal data storage type when the hard constraints can be satisfied, and NOD Placement uses NOD partitioning to generate a plan to partition the data in order to satisfy hard constraints when one data storage type does not work.

\subsection{Lyapunov Optimization based Data Placement}

We define a Lyapunov function $L(t)$ as follows:
\begin{equation}
L(t)\overset{\Delta}{=}\dfrac{1}{2}\left(\sum_{j=1}^{N}[S_j(t)]^2+\sum_{k=1}^{K}[J_k(t)]^2\right).
\label{eq:lyapunov}
\end{equation}
This function represents the data sets to be placed. Then, we can define the derivative of the Lyapunov function as follows:
\begin{equation}
\dfrac{\triangle L(t)}{\triangle t}\overset{\Delta}{=}\E\{L(t+\triangle t)- L(t)|D(t)\}.
\label{eq:derivative}
\end{equation}
We use the expectation to address the randomness of the intermediate data generated by the execution of jobs and the data placement actions. As to solve the problem defined in Formula~\eqref{eq:problem} requires the global information the \TheName ~system, which is hard to predict or gather, \liu{we transform} the problem defined in Formula~\eqref{eq:problem} \liu{to} the following objective function, \liu{which is a greedy conversion with limited local information}:
\begin{align}\label{eq:objective}
&\displaystyle\min\left(\dfrac{\triangle L(t)}{\triangle t} + \omega \cdot \E\{Cost|D(t)\}\right)\\
&~\text{s.t.}
\begin{cases}
\displaystyle\mathrm{T}(job_k, Plan[t]) < \mathrm{TDL_k}, \forall k \in [1, K]\nonumber\\
\displaystyle\mathrm{M}(job_k, Plan[t]) < \mathrm{MB_k}, \forall k \in [1, K]\nonumber\\
\displaystyle p_{i,j}[t] \in [0, 1],\nonumber
\end{cases}
\end{align}
where \liu{$\mathrm{T}(job_k, Plan[t])$ and $\mathrm{M}(job_k, Plan[t])$ are defined in Formulas~\ref{exe_time} and \ref{monetaryCost} respectively, $TDL_k$ represents the hard execution time deadline, $MB_k$ represents the hard monetary cost Budget, and} the parameter $\omega\geq 0$ represents the importance of the \liu{expected} total cost compared with the stability of the system.
\begin{theorem}\label{th1}
The objective function has the following upper bound when $\triangle t = 1$: \\
\begin{align}
&\triangle L(t) + \omega\E\{Cost(Jobs, Plan[t])|D(t)\} \nonumber\\
&\leq L + \omega\sum_{k=1}^K C_k\nonumber\\
&+\E\left\{\sum_{j=1}^N\mathrm{S}_j(t)r_j(t))|D(t)\right\} - \E\left\{\sum_{k=1}^K\mathrm{J}_k(t)G_k[t])|D(t)\right\}\nonumber\\
&+\E\left\{\sum_{j=1}^N\sum_{k=1}^K\sum_{i \in data_k} (\mathrm{J}_k(t) - \mathrm{S}_j(t) + \omega C'_{i,j})p_{i,j}[t]|D(t)\right\},
\label{eq:bound}
\end{align}
with $L$ defined in Formula~\eqref{eq:abre}, $C$ defined in Formula~\eqref{eq:cValue}, and $C'$ defined in Formula~\eqref{eq:cPrimeValue}. 
\label{theorem:upperBound}
\end{theorem}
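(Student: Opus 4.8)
The plan is to bound the Lyapunov drift term $\triangle L(t) = L(t+1) - L(t)$ directly from the queue dynamics, and then add $\omega\,\E\{\mathrm{Cost}(Jobs,Plan[t])\mid D(t)\}$ to both sides. First I would expand $L(t+1)$ using the definition in Formula~\eqref{eq:lyapunov} together with the storage-queue recursion~\eqref{eq:storageQueue} and the job-queue recursion~\eqref{eq:jobQueue}. For each square $[\mathrm{S}_j(t+1)]^2$ I would use the elementary inequality $(\max[a-b,0]+c)^2 \le a^2 + b^2 + c^2 + 2a(c-b)$ (valid for $a,b,c \ge 0$), which turns the $\max$ into something tractable; the same inequality applied to $[\mathrm{J}_k(t+1)]^2$ handles the job queues, here with $b = \sum_{j}\sum_{i\in data_k} p_{i,j}[t]$ and $c = G_k[t]$. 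Summing over $j$ and $k$ and halving, the quadratic leftovers $b^2 + c^2$ collect into the constant $L$ (to be recorded in Formula~\eqref{eq:abre}, presumably a uniform bound on $\sum_j r_j(t)^2 + \sum_k(\ldots)^2 + \sum_k G_k[t]^2$ using the fact that data sizes, removal amounts, and generated data are bounded per slot), while the cross terms $2\mathrm{S}_j(t)(\sum_i p_{i,j}[t] - r_j(t))$ and $2\mathrm{J}_k(t)(G_k[t] - \sum_j\sum_i p_{i,j}[t])$ give exactly the $+\mathrm{S}_j(t)r_j(t)$, $-\mathrm{S}_j(t)\sum_i p_{i,j}[t]$ (wait—sign check needed), $-\mathrm{J}_k(t)G_k[t]$, and $+\mathrm{J}_k(t)\sum_j\sum_i p_{i,j}[t]$ contributions that appear on the right-hand side of~\eqref{eq:bound}.

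Next I would handle the cost term. Using Formulas~\eqref{eq:Cost}, \eqref{eq:totalCost}, and the component formulas \eqref{eq:time}--\eqref{eq:dataaccess}, I would observe that $\mathrm{Cost}(job_k,Plan[t])$ splits into a part that is affine in the placement variables $p_{i,j}[t]$ and a part independent of them (the initialization time, the execution time $\mathrm{ET}$, and the VM-execution monetary term, all of which depend only on $n_k$ and the workload, not on where data is stored). I would bound the placement-independent part of $w_t\mathrm{T_n} + w_m\mathrm{M_n}\cdot\mathrm{f}$ by a constant $C_k$ (Formula~\eqref{eq:cValue}), and collect the coefficient of $p_{i,j}[t]$ into $C'_{i,j}$ (Formula~\eqref{eq:cPrimeValue}) — this coefficient gathers $\tfrac{w_t}{DT_k}\cdot\tfrac{size(d_i)}{speed_j}$ from the data-transfer time, and from the monetary side $\tfrac{w_m \mathrm{f}(job_k)}{DM_k}$ times $(\mathrm{SP}_j\,size(d_i)\cdot \tfrac{\mathrm{WL}(job_k)}{\sum_l \mathrm{WL}(job_l)\mathrm{f}(job_l)} + \mathrm{RP}_j\,size(d_i))$. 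Taking conditional expectation given $D(t)$ and adding $\omega$ times this to the drift bound, the $\omega C'_{i,j}p_{i,j}[t]$ terms merge with the $(\mathrm{J}_k(t) - \mathrm{S}_j(t))p_{i,j}[t]$ terms from the drift into the single summed expression in the last line of~\eqref{eq:bound}, and the $\omega C_k$ terms give $\omega\sum_k C_k$.

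The step I expect to be the main obstacle is getting the drift expansion's signs and index ranges to line up exactly with the claimed bound — in particular making sure the $-\mathrm{S}_j(t)r_j(t)$ versus $+\mathrm{S}_j(t)r_j(t)$ and the $\pm\mathrm{J}_k(t)G_k[t]$ terms come out with the signs written in~\eqref{eq:bound} (the $\max[\cdot,0]$ drops a term of the right sign, so one must argue the inequality direction is preserved), and handling the double-counting issue that a single data set $d_i$ may be the input of several jobs, so the job-queue sum over $k$ and the storage-queue sum over $j$ must be reconciled into the $\sum_j\sum_k\sum_{i\in data_k}$ form. Once the bounding inequality for the squares is applied correctly and the affine decomposition of the cost is in hand, the rest is bookkeeping: regroup terms, pull the deterministic queue values $\mathrm{S}_j(t),\mathrm{J}_k(t)$ outside the conditional expectation, and absorb all placement-independent bounded quantities into $L$ and $\omega\sum_k C_k$. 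I would close by noting that $C$, $C'$, and $L$ are then defined precisely by whatever constants the calculation produces, which is what Formulas~\eqref{eq:abre}, \eqref{eq:cValue}, \eqref{eq:cPrimeValue} are meant to record.
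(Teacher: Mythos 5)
Your route is the same drift-plus-penalty argument the paper uses: expand the one-slot drift via the queue recursions \eqref{eq:storageQueue} and \eqref{eq:jobQueue} with the standard max-inequality (the paper's Formulas \eqref{eq:queueDelta} and \eqref{eq:dataDelta}), absorb the quadratic remainders into the constant $L$ of \eqref{eq:abre}, decompose the cost \eqref{eq:totalCost} into a placement-independent part $C_k$ and a part linear in $p_{i,j}[t]$ with coefficients $C'_{i,j,k}$, take conditional expectations given $D(t)$, and add $\omega$ times the expected cost to the drift bound. So in spirit you and the paper coincide. Two concrete gaps remain, though, and the first is not mere bookkeeping.

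First, the sign issue you flag is exactly the point where the argument, done carefully, does not reach the literal statement. Your (correct) inequality gives cross terms $2\mathrm{S}_j(t)\bigl(\sum_i p_{i,j}[t]-r_j(t)\bigr)$ and $2\mathrm{J}_k(t)\bigl(G_k[t]-\sum_{j}\sum_{i\in data_k}p_{i,j}[t]\bigr)$, so after halving and summing you obtain a bound containing $-\mathrm{S}_j(t)r_j(t)$, $+\mathrm{J}_k(t)G_k[t]$ and the coefficient $(\mathrm{S}_j(t)-\mathrm{J}_k(t)+\omega C'_{i,j})$ in the final sum --- the opposite signs of those asserted in \eqref{eq:bound}. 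The paper's own proof performs precisely this sign reversal when passing from its correct \eqref{eq:queueDelta}--\eqref{eq:dataDelta} to \eqref{eq:upperProof}, without justification (it would require $r_j(t)\ge\sum_i p_{i,j}[t]$ and $\sum_{j}\sum_{i\in data_k}p_{i,j}[t]\ge G_k[t]$, which need not hold); so your honest ``sign check needed'' marks a step that neither your sketch nor the paper actually closes, and as written your derivation proves the sign-corrected variant rather than the stated inequality. Second, your affine decomposition of the cost misclassifies the VM-rental term: $\mathrm{EM}(job_k,Plan[t])=\mathrm{VMP}(job_k)\cdot n_k\cdot[\mathrm{T}(job_k,Plan[t])-\mathrm{InitT}(job_k)]$ depends on the plan through the transfer time $\mathrm{DTT}(job_k,Plan[t])$, so the coefficient of $p_{i,j}[t]$ must also contain the term $\omega_m\,\mathrm{VMP}(job_k)\,n_k/(speed_j\cdot\mathrm{DM_k})$ that appears in the paper's \eqref{eq:cPrimeValue}; with your grouping $C'_{i,j,k}$ is too small and the cost rewriting \eqref{eq:costSum} would not hold. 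The reconciliation of a data set shared by several jobs (your ``double-counting'' worry) is likewise only implicit in the paper, which resolves it later by aggregating over $Jobs_i$ when defining $C'_{i,j}$, so that part of your plan is consistent with what the paper does.
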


\begin{proof}
First, we focus on the data stored in the job data queue with the assumption that $\sum_{i=1}^{M}p_{i,j}[t]\leq d_{\max}$ and $r_j(t)\leq r_{\max}$:
\begin{align}
&~\mathrm{S}^2_j(t+1) - \mathrm{S}^2_j(t) \nonumber\\
&=\left(\max[\mathrm{S}_j(t) - r_j(t), 0] + \sum_{i=1}^{M}p_{i,j}[t]\right)^2 - \mathrm{S}^2_j(t) \nonumber\\
&\leq \left(\sum_{i=1}^{M}p_{i,j}[t]\right)^2 + (r_j(t))^2 - 2\mathrm{S}_j(t)\left(r_j(t) - \sum_{i=1}^{M}p_{i,j}[t]\right) \nonumber\\
&\leq (d_{\max})^2 + (r_{\max})^2 - 2\mathrm{S}_j(t)\left(r_j(t) - \sum_{i=1}^{M}p_{i,j}[t]\right).
\label{eq:queueDelta}
\end{align}

Then, we have the similar results for the data storage spaces with the assumption that $\sum_{i\in data_k, j = 1}^N p_{i,j}[t]\leq data_{\max}$, where $data_{\max}$ represents the maximum number of data sets for any job, and $G_k[t]\leq G_k^{\max}$:
\begin{align}
&~\mathrm{J}^2_k(t+1) - \mathrm{J}^2_k(t) \nonumber\\
=&~\left(\max[\mathrm{J}_k(t) - \sum_{i \in data_k, j = 1}^N p_{i,j}[t], 0] + G_k[t]\right)^2 - \mathrm{J}^2_k(t) \nonumber\\
\leq &~(G_k[t])^2 + \left(\sum_{i\in data_k, j = 1}^N p_{i,j}[t]\right)^2 \nonumber\\
&~- 2\cdot\mathrm{J}_k(t)\left(\sum_{i \in data_k, j = 1}^N p_{i,j}[t] - G_k[t]\right) \nonumber\\
\leq &~(G_k^{\max})^2 + (data_{\max})^2 \nonumber\\
&~- 2 \cdot \mathrm{J}_k(t)\left(\sum_{i \in data_k, j = 1}^N p_{i,j}[t] - G_k[t]\right) 
\label{eq:dataDelta}
\end{align}

With Formulas~\eqref{eq:queueDelta} and \eqref{eq:dataDelta}, we have:
\begin{align}
&\triangle \{L(t)|D(t)\} \nonumber\\
& \leq  L +\E\left\{\sum_{j=1}^N\mathrm{S}_j(t)(r_j(t) - \sum_{i=1}^{M}p_{i,j}[t])|D(t)\right\} \nonumber\\ 
& + \E\left\{\sum_{k=1}^K\mathrm{J}_k(t)\left(\sum_{i \in data_k, j = 1}^N p_{i,j}[t] - G_k[t]\right)|D(t)\right\},
\label{eq:upperProof}
\end{align}
\begin{align}
L=\frac{N}{2}\cdot[(d_{\max})^2 + (r_{max})^2]+\frac{K}{2}\cdot[(G_k^{\max})^2 
+(data_{\max})^2].
\label{eq:abre}
\end{align}

\begin{algorithm}[tbp] 
\caption{Lyapunov-based Near-Optimal Data Placement}
\label{alg:lyapu} 
\begin{algorithmic}[1]
\INPUT
$D$: A set of data sets; \newline
\quad $T$: Maximum number of iterations; \newline
\quad $T'$: Maximum number of iterations for generating data placement plans; \newline
\quad $Plan[t]$: data placement plan in Time slot $t$.
\OUTPUT $Plan[t+1]$: data placement plan in Time slot $t + 1$. 
\State D $\leftarrow$ sort(D) \label{alg:LBA:sort}
\For{t $\in$ T} 
    \While{not all data sets $\in D$ are placed and $iter < T'$}
        \State $Plan^*[t] \leftarrow $ NearOptimalDataPlanning($Plan[t]$) \label{alg:LBA:planning}
        \For{each Data set $d_i$ in $D$} \label{alg:LBA:foreachset}
            \For{$j \in N$}
                \State $p^*_{i, j}[t] \leftarrow$ getPlan($Plan^*[t]$, $i$, $j$)
                \If{$C'_{i, j} \leq 0$ and $p^*_{i, j}[t] \neq 0$} \label{code:lba:ifset}
                    \State Set $p_{i, j}[t+1] = p^*_{i, j}[t]$ \label{code:lba:set}
                \Else
                    \State Set $p_{i, j}[t+1] = 0$ \label{code:lba:unset}
                \EndIf
            \EndFor
        \EndFor
    \EndWhile
    \State $iter \leftarrow iter + 1$
\EndFor
\State Update $J_k(t)$ and $S_j(t)$
\end{algorithmic}
\end{algorithm}

The cost model presented in Section~\ref{subsec:description} can be rewritten as:
\begin{multline}
\E\{\mathrm{cost}(Jobs, Plan[t])|D(t)\} = \\ 
\sum_{k=1}^K C_k +\E\left\{\sum_{j=1}^N\sum_{k=1}^K\sum_{i \in data_k} C'_{i,j,k} \cdot p_{i,j}[t]|D(t)\right\},
\label{eq:costSum}
\end{multline}
\begin{align}
C_k = &~\Bigg(\dfrac{\omega_t \cdot n_k \cdot \mathrm{AIT}}{\mathrm{DT_k}} +\left(\dfrac{\omega_t}{\mathrm{DT_k}}+\dfrac{\omega_m\cdot \mathrm{VMP}(job_k)\cdot n_k}{\mathrm{DM_k}}\right) \nonumber\\
&~\cdot \frac{(\frac{\liuS{\alpha_k}}{n}+(1+\liuS{\alpha_k}))*\mathrm{WL}(job_k)}{\mathrm{CSP}}\Bigg)\cdot\mathrm{f}(job_k),
\label{eq:cValue}
\end{align}
\begin{align}
C'_{i,j,k} = &~\Bigg(\dfrac{\omega_t}{speed_j \cdot \mathrm{DT_k}} + \dfrac{\omega_m \cdot \mathrm{VMP}(job_k) \cdot n_k}{speed_j \cdot \mathrm{DM_k}} + \frac{\omega_m \cdot \mathrm{RP_j}}{\mathrm{DM_k}}\nonumber\\
&~+ \frac{\omega_m \cdot \mathrm{WL}(job_k) \cdot \mathrm{SP_j}}{\sum_{l=1}^K (\mathrm{WL}(job_k) \cdot \mathrm{f}(job_k)) \cdot \mathrm{DM_k}} \Bigg)\nonumber\\
&~\cdot size(d_i) \cdot \mathrm{f}(job_k) 
\label{eq:cPrimeValue}
\end{align}

Finally, we can take the expectation and add the total cost, i.e.,\ $\mathrm{cost}(Plan[t])$ to both sides of Formula~\eqref{eq:upperProof} and hence Theorem~\ref{theorem:upperBound} is proven.
\end{proof}

In order to solve the problem defined in Formula~\eqref{eq:problem}, we minimize the upper bound of Theorem~\ref{eq:bound}. As the status of Time slot $t$ can be observed in the system, we only need to minimize the following element:
\begin{align}
&~\E\left\{\sum_{j=1}^N\sum_{k=1}^K\sum_{i \in data_k} (\mathrm{J}_k(t) - \mathrm{S}_j(t)+\omega C'_{i,j,k})p_{i,j}[t]|D(t)\right\} \nonumber\\
=&~\E\left\{\sum_{j=1}^N\sum_{i=1}^M C'_{i,j}p_{i,j}[t]|D(t)\right\},
\label{eq:finalObjective}
\end{align}
with $C'_{i,j}$ defined as:
\begin{equation}
C'_{i,j} = \sum_{k \in Jobs_i} (\mathrm{J}_k(t) + \omega C'_{i,j,k}) - \mathrm{S}_j(t),
\end{equation}
where $Jobs_i$ represents the set of jobs that process Data set $d_i$.

We design a Lyapunov-based approach to minimize Formula~\eqref{eq:finalObjective}, as shown in Algorithm~\ref{alg:lyapu}. First, we sort the data sets based on $C'_{i,j}$ in descent order in order to minimize the cost of the data set corresponding to high costs first (Line~\ref{alg:LBA:sort}). 
Then, for each data set, we use Algorithm~\ref{alg:Storage} to find an optimal data placement plan (Line~\ref{alg:LBA:planning}).
For each combination of $\{i, j\}$ (Line~\ref{alg:LBA:foreachset}), if $C'_{i,j} \leq 0$ (Line~\ref{code:lba:ifset}), we will update $p_{i,j}[t+1] = p^*_{i,j}$ (Line~\ref{code:lba:set}), otherwise, we will set $p_{i,j}[t+1] = 0$ (Line~\ref{code:lba:unset}).
\liuS{Please note that the data set is placed with a data placement plan that meets reasonable constraints based on Algorithm~\ref{alg:Storage} when $C'_{i,j} \leq 0$. Otherwise, the placement plan remains idle and will be set with a proper data placement plan in later time intervals. When the data placement plan of a data set is idle, the execution of related jobs is postponed until the placement plan is set in order to meet the constraints.}
\liu{Please note that when the data in the system exceed the capacity of the system or the \liuS{given constraints are not reasonable}, the algorithm may not generate a proper data placement plan that meets all the constraints.}

\subsection{Near-Optimal Data Placement Algorithm}\label{data_placement}
\begin{algorithm}[tbp] 
\caption{Near-Optimal Data Planning}
\label{alg:Storage} 
\begin{algorithmic}[1]
\INPUT $D$: A set of data sets; \newline
$Plan$: data placement plan in Time slot $t$.
\OUTPUT $Plan^*$: The near-optimal data placement plan of each data $d$ in data set $D$ with the minimum cost.
\For{each Data $d$ in $D$}
\State $\mathrm{cost_{before}}\leftarrow$ calculateCost($Plan$) \Comment{According to Formula~\eqref{eq:Cost}} \label{al:nodp:cost}
\State $Plan'\leftarrow$ getNearOptimalPlacement($d$, $Plan$) \label{al:nodp:getplan}
\If {$\mathrm{Cost(Plan')}<\mathrm{Cost(Plan)}$} \label{code:fram:com} \label{al:nodp:updatebegin}
    \State $Plan^*\leftarrow Plan'$
\EndIf
\EndFor \label{al:nodp:updateend}
\end{algorithmic}
\end{algorithm}

Based on the multi-objective cost model, we propose a greedy algorithm to generate a near-optimal data placement plan while reducing the \liu{expected} total cost to execute a set of jobs on the \TheName{} platform as shown in Algorithm~\ref{alg:Storage}. In the algorithm, for each Data set $d$, we first calculate the total cost based on the cost model (Line~\ref{al:nodp:cost}). Then, we generate a near-optimal data placement plan by replacing Data set $d$ while keeping the other data sets based on Algorithm~\ref{alg:placement} (Line~\ref{al:nodp:getplan}). Afterward, if the new data placement plan can reduce the total cost according to the cost model, we update the data placement plan if the new data placement plan corresponds to a smaller total cost (Lines~\ref{al:nodp:updatebegin} - \ref{al:nodp:available}).
\begin{algorithm}[tbp] 
\caption{Near-Optimal Data Placement}
\label{alg:placement} 
\begin{algorithmic}[1]
\INPUT $d$: A data set; \newline
$Jobs$: A set of jobs that process Data set $d$; \newline
$StorageTypeList$: The list of storage types; \newline
$Plan$: a data placement plan.
\OUTPUT $Plan^*$: The near-optimal data placement plan of Data set $d$.
\State $Plan^* \leftarrow Plan$
\State $j^*$ = getOptimalType($Plan$, $d$, $StorageTypeList$) \label{alg:NODP:chooseType}
\State $TypesForTimeConstraints \leftarrow$ getTypesForTimeConstraint($Plan$, $d$, $Jobs$) \label{al:nodp:time}
\State $TypesForMonetaryConstraints \leftarrow$ getTypesForMonetaryConstraint($Plan$, $d$, $jobs$) \label{al:nodp:money}
\State $AvailableTypes \leftarrow TypesForTimeConstraints \cap TypesForMonetaryConstraints$ \label{al:nodp:available}
\If {$j^* \in AvailableTypes$}
    \State For $j \in [1, N]$ and $p_{i, j} \in Plan^*$, set $p_{i,j} = \left\{
\begin{aligned}
1 & , j = j^* \\
0 & , j \neq j^*
\end{aligned}
\right.$
\Else
    \State $Plan^* \leftarrow$ dataPlacementWithPartitioning($d$, $Plan$, $Jobs$, $TypesForTimeConstraints$, \\$TypesForMonetaryConstraints$)
\EndIf
\end{algorithmic}
\end{algorithm}

Algorithm~\ref{alg:placement} replaces Data set $d$ in order to reduce the total cost. First, we choose an optimal data storage type $j^*$ based on the data placement plan by trying each data storage type in $storageTypeList$ (Line~\ref{alg:NODP:chooseType}). Then, we choose a set of possible storage type candidates that meet both the hard time deadline constraint and hard monetary budget constraint (Lines~\ref{al:nodp:time} - \ref{al:nodp:money}). If the chosen data storage type $j^*$ is within the set of storage type candidates, we will update the data storage placement. If not, we will exploit Algorithm~\ref{alg:partitioning} to place the data set with data partitioning.
\begin{algorithm}[tbp] 
\caption{Data Placement With Partitioning}
\label{alg:partitioning} 
\begin{algorithmic}[1]
\INPUT $d$: A set of data; \newline
$Jobs$: A set of job that process Data set $d$; \newline
$StorageTypeList$: The list of storage types; \newline
$TypesForTimeConstraints$: A set of storage types that only meet the hard execution time constraint; \newline
$TypesForMonetaryConstraints$: A set of storage types that only meet the hard monetary budget constraint; \newline
$Plan$: a data placement plan.
\OUTPUT $Plan^*$: The near-optimal data placement plan of each data $d$; \newline
$Feasibility$: If there is a data placement plan that meets the two constraints
\State $Plan^* \leftarrow Plan[t]$
\If {$TypesForTimeConstraints = \emptyset$ or $TypesForMonetaryConstraints = \emptyset$} \label{alg:dpwp:verify}
    \State $Feasibility = False$ \label{alg:dpwp:unavailable}
\Else
    \State $j_1 \leftarrow$ getOptimalTypeForTimeConstraint($Plan$, $d$, $Jobs$, $TypesForTimeConstraints$)  \label{alg:dpwp:time}
    \State $j_2 \leftarrow$ getOptimalTypeForMonetaryConstraint($Plan$, $d$, $Jobs$, $TypesForMonetaryConstraints$) \label{alg:dpwp:money}
    \State possibleArea $\leftarrow$ [0, 1] \label{alg:dpwp:areabegin}
    \For{j $\in$ [1, N]} 
        \State possibleArea $\leftarrow$ possibleArea $\cap$ getArea($Plan$, $d$, $j_1$, $j_2$, $Jobs$)
    \EndFor \label{alg:dpwp:areaend}
    \If{possibleArea $= \emptyset$} \label{alg:dpwp:areaVerify}
        \State $Feasibility = False$ \label{alg:dpwp:areaEmpty}
    \Else
        \State $p \leftarrow$ getOptimalPart($d$, $plan$, $Jobs$, $possibleArea$) \label{alg:dpwp:getpart}
        \State For $j \in [1, N]$ and $p_{i, j} \in Plan^*$, 
        set \\ $p_{i, j} = \left\{
\begin{aligned}
p & , j = j_1, \\
1-p & , j = j_2, \\
0 & , else
\end{aligned}
\right.$ \label{alg:dpwp:updatepart}
    \EndIf
\EndIf
\end{algorithmic}
\end{algorithm}

Algorithm~\ref{alg:partitioning} generates a near-optimal data placement plan with the consideration of data partitioning while meeting the two constraints, i.e.,\ the hard time deadline constraint and the hard monetary budget constraint. First, if any of the set of available data storage type candidates for the hard time deadline constraint or hard monetary budget constraint is an empty set, we consider that the two constraints cannot be met (Lines~\ref{alg:dpwp:verify} and \ref{alg:dpwp:unavailable}). If not, first, we choose an optimal type ($j_1$ for the time constraint and $j_2$ for the monetary constraint)  within the set of candidates for each constraint by trying each storage type (Lines~\ref{alg:dpwp:time} and \ref{alg:dpwp:money}). We define a possible area as the range of parts of the data set to be placed at Type $j_1$ while meeting both the two constraints. We can calculate the possible area for each job and the intersection of the area for all the related jobs (Lines~\ref{alg:dpwp:areabegin} - \ref{alg:dpwp:areaend}). 
Given a related job $job_k$ of a data set and two data storage types ($j_1$, $j_2$), we can calculate the possible area based on Formulas~\eqref{eq:Cost} - \eqref{eq:dataaccess}, \eqref{eq:hardTime} and \eqref{eq:hardBudget}, and the calculated area is: $\max\{0, a\} \leq p_{i, j_1} \leq \min\{b, 1\}$ when $c > 0$, or $\max\{a, b\} \leq p_{i, j_1} \leq 1$ when $c <0$, with:
\begin{align}
a =&~ \frac{TDL_k - ET(job_k) - n_k \cdot AIT}{size(d)} \nonumber\\
&~\cdot \frac{speed_{j_1} \cdot speed_{j_2}}{speed_{j_2} - speed_{j_1}} - \frac{speed_{j_1}}{speed_{j_2} - speed_{j_1}},\nonumber
\end{align}
\begin{align}
b =&~ \frac{MB_k}{c \cdot size(d)} - \frac{VMP(job_k) \cdot n_k \cdot ET(job_k)}{c \cdot size(d)}\nonumber\\
&~- \frac{VMP(job_k) \cdot n_k}{c \cdot speed_{j_2}} - \frac{SP_{j_2}}{c \cdot size(d)} - \frac{RP_{j_2}}{c \cdot size(d)},\nonumber
\end{align}
\begin{align}
c =&~ VMP(job_k)\cdot n_k\cdot\left(\frac{1}{speed_{j_1}}-\frac{1}{speed_{j_2}}\right)\nonumber\\
&~+ d \cdot (SP_{j_1} - SP_{j_2}) + (RP_{j_1} - RP_{j_2}),\nonumber
\end{align}
\begin{equation}
d=\frac{\mathrm{WL}(job)}{\sum_{l=1}^K (\mathrm{WL}(j_l) \cdot \mathrm{f}(j_l))},\nonumber
\end{equation}
where $AIT$ represents the average initialization time, $ET$ represents the execution time, which can be calculated based on Formula~\ref{eq:executionTime}, $SP$ represents the storage price, $RP$ represents the read price.
Finally, if the final possible area is an empty set, we consider that the two constraints cannot be met (Lines~\ref{alg:dpwp:areaVerify} and \ref{alg:dpwp:areaEmpty}). 
If not, we calculate the optimal data partitioning by choosing a boundary of the area that corresponds to a smaller total cost and update the data placement plan (Lines~\ref{alg:dpwp:getpart} and \ref{alg:dpwp:updatepart}).

\subsection{Algorithm Analysis}

Let us assume that we have $M$ input data, $N$ data storage types, and each input data is related to $K$ jobs on average. Then, the search space for the problem we address is $\mathcal{O}(N^M)$, which is the complexity of the brute-force method. The complexity of ActGreedy algorithm~\cite{Liu2016data} is $\mathcal{O}(M * K * N)$. Then, the complexity of LNODP is $\mathcal{O}(T * M * K * N)$ (when there is no need to execute Algorithm~\ref{alg:partitioning}) or $\mathcal{O}(T * M * K^2 * N)$ (when Algorithm~\ref{alg:partitioning} is executed for each job), which is much smaller than $\mathcal{O}(N^M)$ when $N^{M-1} > M * K^2 * T$ (this is a general case). \liu{Please note that we do not reduce the complexity of the problem but reduce the complexity of the solution.} The complexity of Economic and Performance (see details in Section~\ref{sec:exp}) is $\mathcal{O}(M * M)$. Although the complexity of LNODP is slightly bigger than that of ActGreedy, Economic, or Performance, it can generate near-optimal data placement plans while satisfying hard constraints. 

LNODP can generate a near-optimal result while satisfying the hard constraints in most cases. However, there are two cases where LNODP cannot generate a data placement plan to satisfy hard constraints for a job. First, when there is no data storage type to store all the input data of a job while satisfying both the hard time deadline and the hard monetary budget. Second, when there is no combination of two storage types that can satisfy both the hard time deadline and the hard monetary budget. In these two cases, the user should reset the hard constraints of the job in order to use LNODP to generate data placement plans.

\liu{In order to analyze the worst case guarantee of the LNODP algorithm, we focus on the case when the data is scheduled according to the near-optimal data plan as explained in Line~\ref{code:lba:set} of Algorithm~\ref{alg:lyapu}. The case explained in Line~\ref{code:lba:unset} of Algorithm~\ref{alg:lyapu} is ignored as the data is not scheduled in this case. The problem addressed in Algorithm \ref{alg:Storage} is a scheduling problem when there is an optimal solution according to Algorithms~\ref{alg:placement} and \ref{alg:partitioning}. When the data can be scheduled without being partitioned, the solution is equal to the solution generated by a greedy algorithm. As the cost function of the combination problem is monotone, Algorithm~\ref{alg:partitioning} can generate an optimal combination of the chosen storage types by Algorithm~\ref{alg:placement}. Thus, when the data needs to be partitioned while scheduling, the solution is also equal to the solution generated by a greedy algorithm. As the scheduling problem while minimizing a cost function is a typical submodular problem as explained in \cite{fokkink2019submodular}, the worst case guarantee of the LNODP algorithm becomes the worst case guarantee of a greedy algorithm for a submodular, which is $\frac{e-1}{e} * f^*$, where $e$ is the base of the natural logarithm and $f^*$ represents the optimal solution~\cite{nemhauser1978analysis}.}

\begin{figure}[!ht]
\centering
\includegraphics[width=0.4\textwidth]{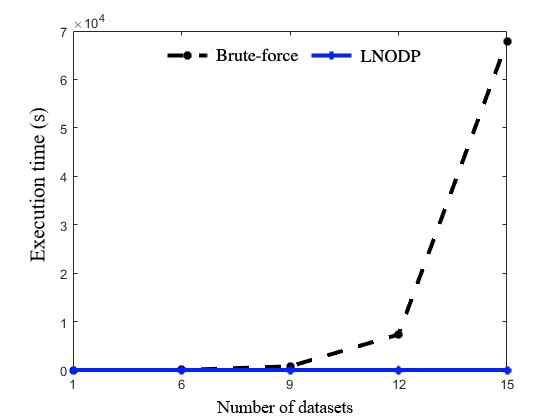}
\vspace*{-3mm}
\label{sub:figexe}
\caption{Execution time of Greedy and Brute-force}
\label{fig:Comparison1}
\vspace*{-6mm}
\end{figure}

\section{Experimentation}\label{sec:exp}

In this section, we first present the simulation to compare the execution time of our proposed Lyapunov-based Near-Optimal Data Placement (LNODP) algorithm and the brute-force method. We consider four storage types, i.e.,\ Standard, Low frequency, Cold, and Archive, in our proposed algorithm. These four storage types are provided by the storage service on the Baidu \liu{cloud}. Then, we compare the total cost of four storage methods: LNODP, brute-force, Performance~\cite{Darwich2020}, and Economic~\cite{Black2016}. 
The brute-force method is to search the minimum cost in the entire searching space, which means that the result of brute-force is the optimal solution. The Performance method~\cite{Darwich2020} uses the storage type that corresponds to the highest data transfer speed. Economic~\cite{Black2016} uses the storage type that corresponds to the smallest price to store data. In addition, we compare our algorithm with a simple adapted greedy algorithm, i.e.,\ ActGreedy~\cite{Liu2016data}, to show that our algorithm can address multiple hard constraints while ActGreedy only reduces the total cost without considering the hard constraints. 
Then, we present the comparison of the total cost among the four storage methods using a widely used data processing benchmark, i.e.,\ Wordcount on Hadoop~\cite{White2015}, and a real-life data processing program for the correlation analysis of COVID-19~\cite{xiong2020understanding} (COVID-19-Correlation), which is selected from recent work related to COVID-19~\cite{xiong2020understanding,liu2020analysis, liu2020investigation}.
In the experimentation, we consider five execution frequencies (daily, semimonthly, monthly, quarterly, and yearly) for Wordcount and COVID-19-Correlation.

\subsection{Simulation}\label{subsec:Simulation}

In this section, we compare our proposed algorithm with the brute-force method in terms of the execution time and the total cost. We take 15 data sets with the average size being 5.5 GB as the input data of jobs. We execute fifteen jobs to process the input data. Each job is associated with different data sets, including Wordcount, Grep, etc. Each job is with different frequencies and different settings such as $DT$, $w_t$. The data sets include DBLP XML files \cite{dblp} and some data sets from Baidu. The DBLP XML file contains the metadata, e.g.,\ the name of authors, publishers, of computer-based English articles. The comparison experiment results are shown in Fig.~\ref{fig:Comparison1}.

\begin{figure}[!ht]
\centering
\includegraphics[width=0.4\textwidth]{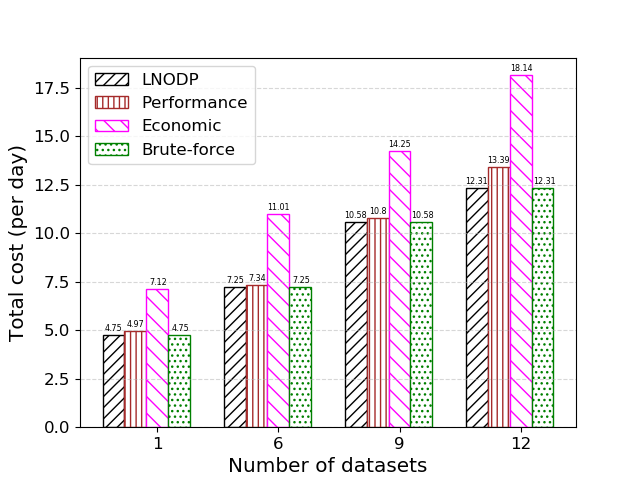}
\label{sub:fig-total}
\vspace*{-3mm}
\caption{Comparison among four methods}
\label{fig:Comparison2}
\vspace*{-3mm}
\end{figure}

Fig.~\ref{fig:Comparison1} shows the result of the execution time of different methods. In order to generate a data placement plan for six data sets with fifteen jobs, the execution time of the greedy algorithm is shorter than 0.0001s, while that of LNODP is 0.08s. When the number of the data sets augments, the execution time of the brute-force method increases exponentially. When the number of data sets becomes 15, the execution time of the brute-force method is 67839s, while that of LNODP remains within 0.0001s. 

Fig.~\ref{fig:Comparison2} presents the comparison among four methods: LNODP, brute-force, Performance, and Economic. LNODP corresponds to the same total cost as that of the brute-force method, which is up to 8.2\% and 30.6\% smaller than that of Performance and Economic, respectively. The simulation experiment shows that the result of our proposed algorithm is as same as the brute-force method, which means the result of our proposed algorithm is the optimal solution in these situations.
\begin{figure*}[htbp]
\centering
\subfigure[Daily]{
\includegraphics[width=0.31\textwidth]{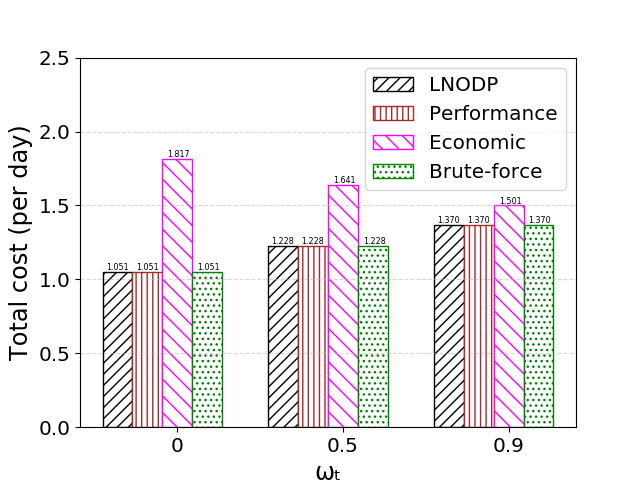}
\label{sub:daily}
}
\subfigure[Quarterly]{
\includegraphics[width=0.31\textwidth]{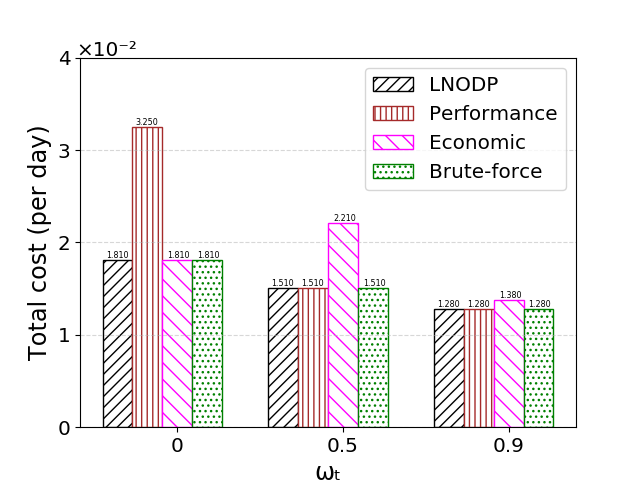}
\label{sub:quarter}
}
\subfigure[Yearly]{
\includegraphics[width=0.31\textwidth]{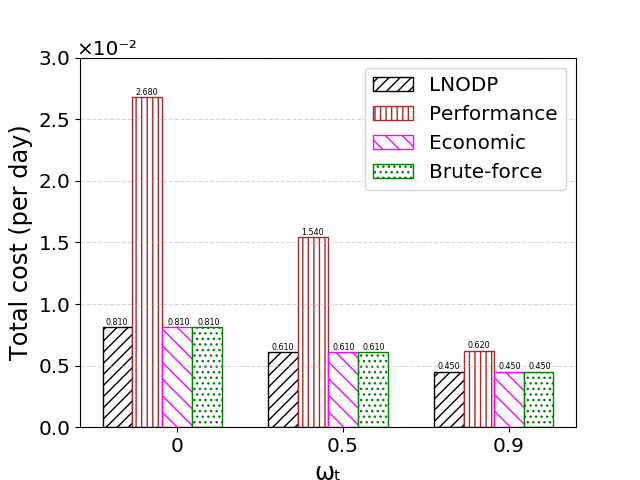}
\label{sub:yearly}
}
\vspace*{-3mm}
\caption{Total cost of Wordcount}
\vspace*{-4mm}
\label{fig:wordcount}
\end{figure*}
\begin{figure*}[htbp]
\centering
\subfigure[Daily]{
\includegraphics[width=0.31\textwidth]{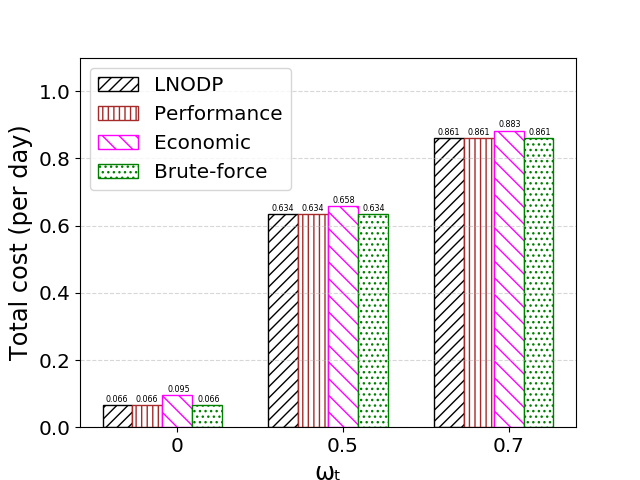}
\label{sub:covid-daily}
}
\subfigure[Quarterly]{
\includegraphics[width=0.31\textwidth]{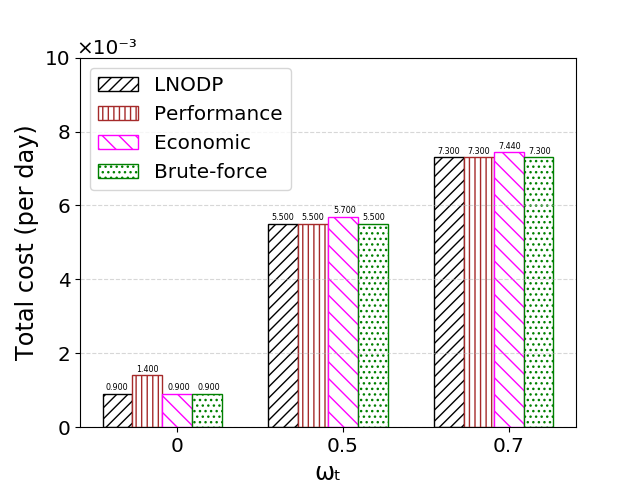}
\label{sub:cocid-quarter}
}
\subfigure[Yearly]{
\includegraphics[width=0.31\textwidth]{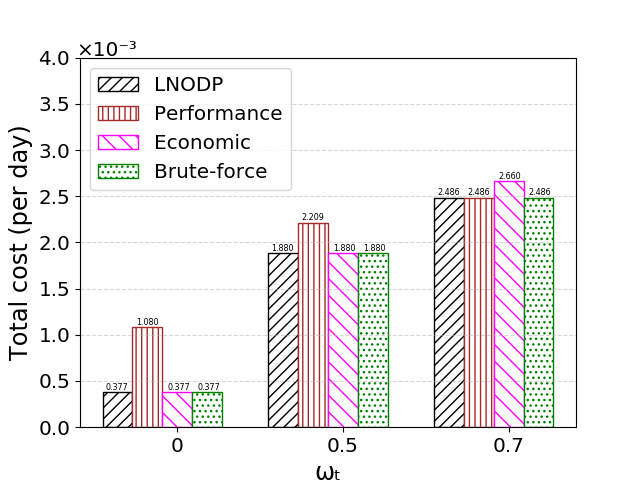}
\label{sub:covid-yearly}
}
\vspace*{-3mm}
\caption{Total cost of COVID-19-Correlation.}
\vspace*{-6mm}
\label{fig:covid}
\end{figure*}

\subsection{Wordcount}
\label{subsec:wordcount}

Hadoop~\cite{hadoop} is a framework for parallel big data processing on a cluster of commodity servers. Hadoop contains two components, i.e.,\ HDFS~\cite{Shvachko2010} and MapReduce. HDFS is a distributed file system with a master-slave architecture. MapReduce is a programming model and implementation for parallel data processing in a distributed environment. MapReduce contains two phases, i.e.,\ Map and Reduce. In the Map phase, the input data is processed, and key-value pairs are generated. In the reduce phase, the key-value pairs of the same Key are processed. 

Wordcount is a widely used benchmark, which counts the frequency of each word in the input files. Wordcount contains two steps, i.e.,\ Map and Reduce. In the Map step, $<word, 1>$ is generated for each work in the input data. Then, the number of $<word, 1>$ is counted for each work in the Reduce step. Finally, the frequency of each word is calculated and stored in HDFS.

We deploy Hadoop on three computing nodes based on the platform. Each node is a VM with one CPU core and 4 GB RAM. We use DBLP 2019 XML files of 6.04 GB as the input data.
We set $DT$ as 1200 seconds and $DM$ as 1 dollar.

First, we set the hard time deadline as 2000 seconds and 10 dollars. Fig.~\ref{fig:wordcount} shows that our proposed algorithm, i.e.,\ LNODP, significantly outperforms the baseline approach. When the frequency is daily, the total cost corresponding to different approaches is shown in Fig.~\ref{sub:daily}. Compared with Economic, LNODP can reduce the total cost by 42.2\%, \liuS{25.2}\%, and \liuS{8.7}\% when $\omega _{t}$ is \liuS{0}, 0.5, and \liuS{0.9} respectively. When the frequency is quarterly, LNODP can reduce the total cost by \liuS{44.3}\% compared with Performance when the $\omega _{t}$ is 0 as Fig.~\ref{sub:quarter} shows. LNODP can generate an optimal storage plan, which significantly outperforms (the total cost is \liuS{31.7}\% and \liuS{7.2\%} smaller) Economic when $\omega _{t}$ is 0.5 \liuS{and 0.9, respectively}. Fig.~\ref{sub:yearly} presents the efficiency of our proposed algorithm when the frequency is yearly. Compared with Performance, our algorithm can reduce the total cost by \liuS{69.8\%,} 60.4\% and 27.4\% when $\omega_{t}$ is \liuS{0,} 0.5 and 0.9, respectively. 

\begin{table}[]
        \centering
        \caption{Results for hard execution time constraint and hard monetary budget constraint. Frequency: yearly. Hard time deadline: 1420; hard monetary budget: 6.5. The time unit is second and the monetary unit is yuan.}
        \begin{tabular}{|c|c|c|c|c|}
                \hline
                \multirow{2}{*}{} & \multicolumn{2}{c|}{Constraints} & \multirow{ 2}{*}{Cost}& \multirow{2}{*}{\textbf{$\omega_t$}} \\\cline{2-3}
                & Time & Monetary & & \\ \hline
                LNODP & Satisfied (1420.0) & Satisfied (6.5) & 0.018 & \multirow{4}{*}{0}  \\ \cline{1-4}
                ActGreedy & \textbf{Broken (1465.8)} & Satisfied (2.9) & 0.0081 &   \\ \cline{1-4}
                Performance & Satisfied (1405.4) & \textbf{Broken (9.7)} & 0.027 &  \\ \cline{1-4}
                Economic & \textbf{Broken (1465.8)} & Satisfied (2.9) & 0.0081 & \\  \hline \hline
                LNODP & Satisfied (1420.0) & Satisfied (6.5) & 0.0053 & \multirow{ 4}{*}{0.9}  \\ \cline{1-4}
                ActGreedy  & \textbf{Broken (1465.8)} & Satisfied (2.9) & 0.0045 &   \\ \cline{1-4}
                Performance & Satisfied (1405.4) & \textbf{Broken (9.7)} & 0.0062 &  \\ \cline{1-4}
                Economic & \textbf{Broken (1465.8)} & Satisfied (2.9) & 0.0045 & \\  \hline
        \end{tabular}
        \label{tab:wordcount}
\end{table}

Fig.~\ref{fig:wordcount} presents that our algorithm can reduce the total cost by up to \liuS{69.8}\% compared with Performance and up to \liuS{42.2}\% compared with Economic. As the execution frequency of the job decreases, the advantage of our algorithm becomes significant. The comparison of Fig.~\ref{sub:daily}, \ref{sub:quarter} and \ref{sub:yearly} indicates that as \liuS{the importance of time cost becomes bigger}, i.e.,\ $\omega _{t}$, increases, the advantage of our proposed algorithm becomes significant as well. This experiment also shows that the result of our algorithm can generate the optimal solution as the brute-force method.

Table~\ref{tab:wordcount} presents the execution with a strict hard execution time constraint and a hard monetary budget constraint, i.e.,\ 1420 seconds and 6.5 dollars. The existing methods, e.g.,\ ActGreedy, Performance, Economic, cannot meet both the two constraints, while LNODP can place the data with data partitioning while satisfying the two hard constraints with small total cost. In addition, we find that the weight of objectives only impacts the total cost, which has no impact on the satisfaction of the constraints.

\begin{table}[]
        \centering
        \caption{Results for hard execution time constraint and hard monetary budget constraint. Frequency: yearly. Hard time deadline: 722; hard monetary budget: 1.9. The time unit is second and the monetary unit is yuan.}
        \begin{tabular}{|c|c|c|c|c|}
                \hline
                \multirow{2}{*}{} & \multicolumn{2}{c|}{Constraints} & \multirow{ 2}{*}{Cost}& \multirow{2}{*}{\textbf{$\omega_t$}} \\\cline{2-3}
                & Time & Monetary & & \\ \hline
                LNODP & Satisfied (722.0) & Satisfied (1.8) & 0.0050 & \multirow{4}{*}{0}  \\ \cline{1-4}
                ActGreedy  & \textbf{Broken (732.1)} & Satisfied (0.7) & 0.0019 &   \\ \cline{1-4}
                Performance & Satisfied (720.8) & \textbf{Broken (1.95)} & 0.00054 &  \\ \cline{1-4}
                Economic & \textbf{Broken (732.1)} & Satisfied (0.7) & 0.0019 & \\  \hline \hline
                LNODP & Satisfied (722.0) & Satisfied (1.8)  & 0.0038 & \multirow{ 4}{*}{0.7}  \\ \cline{1-4}
                ActGreedy  & \textbf{Broken (732.1)}  & Satisfied (0.7)  & 0.0029 &   \\ \cline{1-4}
                Performance & Satisfied (720.8) & \textbf{Broken (2.0)}  & 0.0040 &  \\ \cline{1-4}
                Economic & \textbf{Broken (732.1)} & Satisfied (0.7) & 0.0029 & \\  \hline
        \end{tabular}
        \label{tab:covid}
\end{table}

\liu{In addition, the average execution time of LNODP, Performance, Economic, and Brute-force are 2.79$*10^{-4}$, 4.26$*10^{-5}$, 4.14$*10^{-5}$, 2.98$*10^{-4}$, respectively. While LNODP corresponds can generate good data placement plans, the execution time remains quite acceptable.}

\subsection{COVID-19}\label{subsec:COVID-19}

Since the coronavirus disease (COVID-19) has become a global emergency, we reproduced the data processing program for the correlation among COVID-19-related search activities, human mobility, and the number of confirmed cases in Mainland China presented in~\cite{xiong2020understanding}. The data involved in~\cite{xiong2020understanding} includes the number of confirmed cases in each city ($dataset_c$), the volume of COVID-19-related search activities in each city ($dataset_s$), inflows and outflows for each city ($dataset_m$) and the  population in each city ($dataset_p$). $dataset_m$ is the inflow and outflow data of inter-city population with the transitions of the inter-city mobility categorized by the origin and destination pairs. $dataset_s$ includes the keywords and phrases related to the epidemic from January to March.
The total amount of these data sets is 1.134 GB. 

The data processing for the COVID-19-related correlation analysis consists of the following three steps. First, the data is selected using a filter operation. Then, a join operator is used to generate the features for each city, i.e.,\ the number of confirmed cases, the inflows, the outflows, the search volumes, the population. Afterward, the correlation between any two features is calculated for each city. The experimental results are shown in Fig.~\ref{fig:covid}. We set $DT$ as 600 seconds and $DM$ as 0.5 dollars.

First, we set the hard execution time constraint as 800 seconds and the hard monetary budget constraint as 2 dollars. Fig.~\ref{fig:covid} shows that our proposed algorithm, i.e.,\ LNODP, significantly outperforms Performance (up to \liuS{65.1}\%) when the size of the input data of the job is smaller than that of Wordcount. When the frequency is daily, the total costs of different approaches are shown in Fig.~\ref{sub:covid-daily}. When $\omega_{t}$ is 0 and $\omega_{m}$ is 1, our algorithm can reduce the total cost by 30.5\% compared with Economic. When $\omega _{t}$ increases to 0.5, our algorithm can reduce the total cost by \liuS{3.6}\% compared with Economic. When the importance of time, i.e.,\ $\omega_{t}$, increases to 0.7, our algorithm can outperform Economic, and the total cost can be reduced by \liuS{2.5}\%.  
Fig.~\ref{sub:cocid-quarter} presents the total cost of different approaches when the frequency is quarterly. When the user only considers the importance of money, our algorithm can reduce the total cost by 35.7\% compared with Performance. With the increase of $\omega _{t}$, our algorithm can reduce the total cost by \liuS{3.5}\% and 1.9\% compared with Economic when $\omega_{t}$ is 0.5 and 0.7 respectively.
When the frequency is yearly, the execution results are shown in Fig.~\ref{sub:covid-yearly}. The most significant result is that our algorithm can reduce the total cost by \liuS{65.1}\% compared with Performance when $\omega_{t}$ is 0 and $\omega_{m}$ is 1. When $\omega_{t}$ is 0.5 and 0.7, our algorithm can reduce the total cost by \liuS{14.9}\% and \liuS{6.5}\% compared with Performance and Economic, respectively. 

From Fig.~\ref{fig:covid}, we find that our proposed algorithm, i.e.,\ LNODP, significantly outperforms the Performance method (up to \liuS{65.1}\%) and the Economic method (up to \liuS{30.5}\%), when the frequency of the job execution is high and when the size of the input data of the job is big. 
%

Table~\ref{tab:covid} presents the execution with a strict hard execution time constraint and a hard monetary budget constraint, i.e.,\ 722 seconds and 1.9 dollars. 
The existing methods, e.g.,\ ActGreedy, Performance, Economic, cannot meet both the two constraints. However, LNODP can place the data with data partitioning while satisfying the two hard constraints with a small total cost. We find that the weight of objectives only impacts the total cost while having no impact on the satisfaction of the constraints.

\liu{In addition, the average execution time of LNODP, Performance, Economic, and Brute-force are 2.01$*10^{-4}$, 2.01$*10^{-5}$, 2.01$*10^{-5}$, 2.04$*10^{-4}$, respectively. The execution time of LNODP is quite acceptable.}


\section{conclusion}
\label{sec:con}

When organizations outsource their data \liu{onto} the \liu{cloud}, it is critical to choose a proper data placement strategy to reduce \liu{its expected total} cost. In this paper, we proposed a solution to enable data processing on the \liu{cloud} with the data from different organizations. The approach consists of three parts: a data federation platform with secure data sharing and secure data computing, a multi-objective cost model, and a Lyapunov-based near-optimal data placement algorithm. The cost model consists of monetary cost and execution time. The Lyapunov-based near-optimal algorithm \liu{delivers a solution to the problem}. We carried out extensive experiments to validate our proposed approach. The experimental results indicate that our proposed algorithm outperforms the baseline approaches up to 69.8\% and that our algorithm can generate the same optimal solution as the brute-force method within a short execution time.



\bibliographystyle{IEEEtran}

\bibliography{references}

\end{document}